\newcommand{\oten}{\mathop{\overline{\otimes}}}
\newtheorem{theorem}{Theorem}
\newtheorem{lemma}{Lemma}
\newtheorem{cond}{Condition}
\theoremstyle{definition}
\newtheorem{remark}{Remark}
\begin{document}

\title{Discrete Approximation of Quantum Stochastic Models}
\thanks{This work was supported by the ARO under grant W911NF-06-1-0378.}

\author{Luc Bouten}
\address{Physical Measurement and Control 266-33, California Institute
of Technology, Pasadena, CA 91125, USA}
\email{bouten@its.caltech.edu}

\author{Ramon van Handel}
\address{Department of Operations Research and Financial Engineering,
Princeton University, Princeton, NJ 08544, USA}
\email{rvan@princeton.edu}

\begin{abstract}
We develop a general technique for proving convergence of repeated quantum 
interactions to the solution of a quantum stochastic differential 
equation.  The wide applicability of the method is illustrated in a 
variety of examples.  Our main theorem, which is based on the Trotter-Kato 
theorem, is not restricted to a specific noise model and does not require 
boundedness of the limit coefficients.
\end{abstract}

\maketitle

\section{Introduction}

It has been well established that the quantum stochastic equations
introduced by Hudson and Parthasarathy \cite{HuP84} provide an essential
tool in the theoretical description of physical systems, especially those
arising in quantum optics. The time evolution in these models is given by
a unitary cocycle that solves a Hudson-Parthasarathy quantum stochastic
differential equation (QSDE). These unitaries define a flow, which is a
quantum Markov process in the sense of \cite{AFL82}, that represents the
Heisenberg time evolution of the observables of the physical system.  
Several authors have studied how quantum stochastic models can be obtained
as a limit of fundamental models in quantum field theory
\cite{AFLu90,Gou05,DDR06}. This provides a sound justification for using
quantum stochastic models to describe physical systems arising, e.g., in
quantum optics.

In contrast to the limit theorems for field theoretic models, which are
non-Markovian and have continuous time parameter, it is natural to ask
whether QSDEs can be obtained as a limit of discrete time quantum Markov
chains.  Classical counterparts of such results are ubiquitous in
probability theory and there is a variety of motivations (to be discussed
further below) to study such limits. The first results on this topic date
back to the work of Lindsay and Parthasarathy \cite{Par88,LiP88}. In
\cite{Par88} it is shown that a particular class of repeated interaction
models, where a physical system is coupled to a spin chain, converge in a
very weak sense (in matrix elements) to the solution of a QSDE. A
significant step forward was taken in \cite{LiP88} where the authors embed
a chain of finite dimensional noise systems in the algebra of bounded
operators on the Fock space and show strong convergence of the discrete
flow to the flow obtained from a QSDE. Much later Attal and Pautrat
\cite{AtP06} obtained similar results (in the special case of a spin
chain) by showing that the discrete unitaries (rather than the flows)
converge strongly to the solution of a QSDE.

Independently from the previous work on the convergence of discrete 
chains, Holevo \cite{Hol92,Hol94} has studied a very similar problem in 
his work on time-ordered exponentials in quantum stochastic calculus.
The essence of Holevo's approach is to define time-ordered stochastic 
exponentials as the limit of discrete interaction models, where the role 
of the discrete noise is played by the increments of the field operators 
(the discrete noise is thus infinite dimensional in this setting, in 
contrast to the finite dimensional models considered by
Lindsay and Parthasarathy).  Despite the rather different motivation, the 
results of Holevo are strikingly similar to those obtained in the study of 
limits of discrete interaction models, as has been pointed out by
Gough \cite{Gou04}.  

None of these results, however, are capable of dealing with the physically
important case of limit QSDEs with unbounded initial coefficients (a
typical setting in, e.g., quantum optics).  This restriction is inherent
to the techniques used to prove these results, which rely on Dyson series
expansions and require boundedness of the coefficients.  Moreover, each of
the above results has been proved separately in its own setting, while the
similarity between these results strongly suggests that they should be
unified within a common framework.

The purpose of this paper is to introduce a general technique for proving
convergence of a sequence of discrete quantum Markov chains to the
solution of a QSDE.  Our approach does not rely on a Dyson series
expansion, but instead employs a form of the Trotter-Kato theorem.  This
allows us to deal with unbounded coefficients in a natural and transparent
manner.  In the simpler case where the limit coefficients and/or discrete
noises are bounded, we obtain many of the previous results as special
cases of our general theorems.  Moreover, the specific functional form of
the limiting coefficients, obtained in \cite{AtP06,Hol92,Gou04} by
identifying a power series obtained from the Dyson expansion, is
effectively demystified: we will see that it it is an immediate
consequence of the scaling of the noise operators.

Our motivation for this work is twofold.  First, the convergence of
discrete quantum Markov chains to continuous ones is a fundamental problem
in quantum probability theory.  In classical probability theory such
problems have been investigated for many decades, and the theory has
culminated in the well known work of Stroock and Varadhan \cite[section
11.2]{SV79}, Ethier and Kurtz \cite{EK86} and Kushner \cite{Kus84}, among
others.  A similar systematic investigation was hitherto lacking the the
quantum probability literature.  This paper presents one attempt to unify
and extend the existing results in this direction.

Second, we are motivated by practical problems in which one is
specifically interested in the convergence of discrete to continuous
models.  For example, certain laboratory experiments, e.g., atomic beam
experiments with a large flux of atoms, can be approximately modelled by
quantum stochastic equations in the limit of large flux.  Another
application of independent interest is the development of numerical
methods for quantum stochastic models.  To perform tractable numerical
simulations one is often forced to discretize, particularly in dynamical
optimization problems which appear in the emerging field of quantum
engineering \cite{BHJ06}, and convergence of the discretized
approximations is a challenging topic. We were motivated in particular by
the problem of discretizing \textit{linear} quantum systems, which play a
special role in linear systems theory \cite{JNP07,JNP07b,Mab08}, but are
not covered by previous results as both the noise and the initial system
are necessarily unbounded.

As compared to previous results, our approach is closest to the original
method of Lindsay and Parathasarathy \cite{LiP88}.  The simple uniform
convergence result \cite[proposition 3.3]{LiP88} is replaced in our
setting by a variant of the Trotter-Kato theorem \cite{EK86}, which allows
us to deal with the analytic complications inherent to the case of
unbounded coefficients.  We also work directly with the unitary evolution
rather than with the flow.  The Trotter-Kato theorem allows us to obtain
convergence by studying generators, and exploits in a fundamental way the
Markov property of both the approximate and the limit evolutions.  

Techniques to obtain convergence for QSDEs by studying generators were
introduced by Fagnola \cite{Fag93} and Chebotarev \cite{Che03} (using
resolvents) and by Lindsay and Wills \cite{LiW06,LiW07} (using the
Trotter-Kato theorem).  We have previously applied related techniques to
obtain general results on singular perturbation problems for quantum
stochastic models \cite{BoS08,BHS08}.  The application of techniques of
this type to obtain the convergence of discrete quantum models is new.

The remainder of this paper is organized as follows.  In section
\ref{sec:main} we introduce the class of discrete interaction models and
limit models which will be of interest throughout the paper, and we state
our main results.  The main theorem is a generalization of the
Trotter-Kato theorem to quantum stochastic models, and is generally
applicable.  We also introduce a more restricted family of discrete models
for which the conditions of this result can be verified explicitly.
Section \ref{sec:ex} develops a variety of known and new examples using
our results.  Finally, section \ref{sec:proofs} is devoted to the proofs
of our results.

\section{Main Results}
\label{sec:main}

In the following subsections, we first define the class of models
that we will consider and introduce the necessary assumptions.
This is followed by the statement of our main results.  The proofs of
our main results are contained in section \ref{sec:proofs} below.

\subsection{The limit model}

Throughout this paper we let $\mathcal{H}$, the initial space, be a
separable (complex) Hilbert space. We denote by
$\mathcal{F}=\Gamma_s(L^2(\mathbb{R}_+;\mathbb{C}^n))$ the symmetric Fock
space with multiplicity $n\in\mathbb{N}$ (i.e., the one-particle space is
$\mathbb{C}^n\otimes L^2(\mathbb{R}_+) \cong
L^2(\mathbb{R}_+;\mathbb{C}^n)$), and by $e(f)$, $f\in
L^2(\mathbb{R}_+;\mathbb{C}^n)$ the exponential vectors in $\mathcal{F}$.
The annhiliation, creation and gauge processes on $\mathcal{F}$, as well
as their ampliations to $\mathcal{H}\otimes\mathcal{F}$, will be denoted
as $A_t^i$, $A_t^{i\dag}$ and $\Lambda_t^{ij}$, respectively (the channel
indices are relative to the canonical basis of $\mathbb{C}^n$).
Moreover, we will fix once and for all a dense domain
$\mathcal{D}\subset\mathcal{H}$ and a dense domain of exponential vectors
$\mathcal{E}=\mathrm{span}\{e(f):f\in \mathfrak{S}\}\subset \mathcal{F}$,
where $\mathfrak{S}\subset L^2(\mathbb{R}_+;\mathbb{C}^n)\cap
L^\infty_{\rm loc}(\mathbb{R}_+;\mathbb{C}^n)$ is an admissible subspace
in the sense of Hudson-Parthasarathy \cite{HuP84} which is presumed to
contain at least all simple functions.  An introduction to these concepts 
(using a similar notation to the one used here) can be found in 
\cite{Bar03}, and we refer to \cite{HuP84,Par92} for a detailed 
description of quantum stochastic calculus.

Consider a quantum stochastic differential equation of the form
\begin{equation}
\label{eq:hplim}
        dU_t =
        U_t\left\{
        \sum_{i,j=1}^n (N_{ij}-\delta_{ij})\, d\Lambda^{ij}_t
        + \sum_{i=1}^n M_i\, dA^{i\dag}_{t}
        + \sum_{i=1}^n L_i\, dA^i_t  + K\, dt
        \right\},
\end{equation}
where $U_0 =I$ and the quantum stochastic integrals are defined
relative to the domain $\mathcal{D}\oten\mathcal{E}$ (for simplicity, we 
will use the same notation for operators on $\mathcal{H}$ or on 
$\mathcal{F}$ and for their ampliations to $\mathcal{H}\otimes\mathcal{F}$).
Under certain conditions to be introduced below, the solution of this 
equation describes the time evolution of quantum stochastic models such as
those used in quantum optics.  The purpose of this paper is to prove that 
the solution of this equation may be approximated by appropriately chosen
discrete interaction models.  The approximating models will be defined in 
the following subsection.

Denote by $\theta_t:L^2([t,\infty[\mbox{};\mathbb{C}^n)\to
L^2(\mathbb{R}_+;\mathbb{C}^n)$ the canonical shift $\theta_tf(s)=f(t+s)$,
and by $\Theta_t:\mathcal{F}_{[t}\to\mathcal{F}$ its second quantization
(here $\mathcal{F}\cong\mathcal{F}_{t]}\otimes\mathcal{F}_{[t}$ denotes
the usual continuous tensor product decomposition). Recall that an
adapted process $\{U_t:t\ge 0\}$ on $\mathcal{H}\otimes{\mathcal{F}}$ is
called a \textit{unitary cocycle} if $U_t$ is unitary for all
$t\ge 0$, $t\mapsto U_t$ is strongly continuous and
$U_{s+t}=U_s(I\otimes\Theta_s^*U_t\Theta_s)$, where
$I\otimes\Theta_s^*U_t\Theta_s$ is viewed as an operator on
$\mathcal{F}_{s]}\otimes(\mathcal{H}\otimes\mathcal{F}_{[s})\cong
\mathcal{H}\otimes\mathcal{F}$.  The following condition will always be 
presumed to be in force.

\begin{cond}
\label{cond:limit}
The operators $K$, $L_i$, $M_i$, and $N_{ij}$, defined on the domain
$\mathcal{D}$, are such that the Hudson-Parthasarathy equation 
(\ref{eq:hplim}) possesses a unique solution $\{U_t:t\ge 0\}$ which 
extends to a unitary cocycle on $\mathcal{H}\otimes\mathcal{F}$.
\end{cond}

\begin{remark}
\label{rem:hp}
When the coefficients $K$, $L_i$, $M_i$, $N_{ij}$ are bounded, it is well
known \cite{HuP84} that condition \ref{cond:limit} holds true if and only 
if the following algebraic relations are satisfied:
\begin{eqnarray*}
        &&K+K^* = -\sum_{i=1}^n L_iL_i^*,\qquad
        M_i = -\sum_{j=1}^n N_{ij}L_j^*, \\
        &&\sum_{j=1}^n N_{mj}N_{\ell j}^* =
        \sum_{j=1}^n N_{jm}^* N_{j\ell} =
        \delta_{m\ell}.
\end{eqnarray*}
Remarks on the verification of condition \ref{cond:limit} in the 
unbounded case can be found in \cite{BHS08}.
\end{remark}

\begin{remark}
\label{rem:lefteq}
We have chosen the \textit{left} Hudson-Parthasarathy equation
(\ref{eq:hplim}), rather than the more familiar
\textit{right} equation where the solution is placed to the right of
the coefficients.  This means that the Schr\"odinger evolution of a state
vector $\psi\in\mathcal{H}\otimes\mathcal{F}$ is given by $U_t^*\psi$,
etc.  The reason for this choice is that for equations with unbounded 
coefficients, it is generally much easier to prove the existence of a 
unique cocycle solution for the left equation than for the right equation 
(see, e.g., \cite{Fag90,LiW06}).  As we will ultimately prove convergence 
of discrete evolutions to $U_t^*$, there is no loss of generality in 
working with the more tractable left equations.  If we wish to begin with 
a well defined right equation (as is more natural when the coefficients 
are bounded), our results can be immediately applied to the 
Hudson-Parthasarathy equation for its adjoint.
\end{remark}

\subsection{The discrete approximations}

A discrete interaction model describes the repeated interaction of an
initial system with independent copies of an external noise source.
Given an initial Hilbert space $\mathcal{H}'$ and a noise Hilbert space
$\mathcal{K}'$, a single interaction is described by a unitary operator
$R'$ on $\mathcal{H}'\otimes\mathcal{K}'$.  To describe repeated 
interactions, we work on the Hilbert space $\mathcal{H}'\otimes
\bigotimes_{\mathbb{N}}\mathcal{K}'$ on which we define the natural
isomorphism
$$
	\Xi_k:(\mathcal{K}')^{\otimes (k-1)}\otimes
	\mathcal{H}'\otimes\bigotimes_{\mathbb{N}}\mathcal{K}'\to
	\mathcal{H}'\otimes\bigotimes_{\mathbb{N}}\mathcal{K}'
$$
as $\Xi_k(\psi_{k-1]}\otimes\xi\otimes\psi_{[k})=
\xi\otimes(\psi_{k-1]}\otimes\psi_{[k})$.
We now define recursively $R_0'=I$ and
$$
	R_k'(\xi\otimes \psi_{k-1]}\otimes\psi_k\otimes\psi_{[k+1}) =
	R_{k-1}'\Xi_k(\psi_{k-1]}\otimes 
	R'(\xi\otimes\psi_k)\otimes\psi_{[k+1})
$$
for $k\in\mathbb{N}$, where $\psi_{k-1]}\otimes\psi_k\otimes\psi_{[k+1}\in 
(\mathcal{K}')^{\otimes (k-1)}\otimes 
\mathcal{K}'\otimes\bigotimes_{\mathbb{N}}\mathcal{K}'\cong 
\bigotimes_{\mathbb{N}}\mathcal{K}'$ and $\xi\in\mathcal{H}'$. This is 
precisely the discrete counterpart of a unitary cocycle.  Note that the 
order of multiplication in the recursion for $R_k'$ matches the choice 
of the \textit{left} Hudson-Parthasarathy equation above (i.e., 
$R_k^{\prime *}\psi$ is the Schr{\"o}dinger evolution, see remark 
\ref{rem:lefteq}).

The purpose of this paper is to prove that the solution of the
Hudson-Parthasarathy equation (\ref{eq:hplim}) can be approximated by a
sequence of discrete interaction models with decreasing time step.  In
order to study this problem, we will embed our discrete interaction models
in the limit Hilbert space $\mathcal{H}\otimes\mathcal{F}$.  This
allows us to prove strong convergence of the embedded discrete cocycles to
the solution of equation (\ref{eq:hplim}).  The precise way in which the
embedding is done does not affect the proof of our main result; we
therefore proceed in a general fashion by defining a fixed but otherwise
arbitrary sequence of discrete interaction models which are already
embedded into the limit Hilbert space (this avoids, without any loss of
generality, the notational burden of introducing separate Hilbert spaces
and embedding maps for every discrete approximation).  As a special case 
of this general construction, we will introduce in section 
\ref{sec:explicit} below an interesting class of discrete 
interaction models for which the embedding is made explicit.

We proceed to introduce the embedded discrete interaction models.
For every $k\in\mathbb{N}$, we will define a discrete interaction model 
with time step\footnote{
	This choice is only made to keep our notation manageable,
	and is by no means a restriction of the method.
} $2^{-k}$ (the ultimate goal being to take the limit as $k\to\infty$).
By its continuous tensor product property, the Fock space is isomorphic
to $\mathcal{F}\cong\bigotimes_{\mathbb{N}}\mathcal{F}_{2^{-k}]}$, where
each component $\mathcal{F}_{2^{-k}]}$ represents a consecutive time 
slice of length $2^{-k}$.  Our discrete interaction models will be 
embedded as repeated interactions with consecutive time slices of the 
field.  Let us make this precise: the following notations will be used 
throughout the paper.  Let $\mathcal{H}^k\subset\mathcal{H}$ be the 
initial Hilbert space and let $\mathcal{K}^{k}\subset\mathcal{F}_{2^{-k}]}$
be the noise Hilbert space for the interaction model with time step 
$2^{-k}$.  We will write
$$
	\mathcal{F}^k=\bigotimes_{\mathbb{N}}\mathcal{K}^k
	\subset \bigotimes_{\mathbb{N}}\mathcal{F}_{2^{-k}]}
	\cong\mathcal{F}.
$$
We now introduce a unitary operator $R^k$ on $\mathcal{H}^k\otimes
\mathcal{K}^k$ which describes the interaction in a single time step, and 
we extend $R^k$ to $\mathcal{H}\otimes\mathcal{F}$ by setting 
$R^k\psi=\psi$ for $\psi\in(\mathcal{H}^k\otimes\mathcal{K}^k\otimes
\mathcal{F}_{[2^{-k}})^\perp$. We now define recursively $R_t^k = I$ 
for $t\in[0,2^{-k}[\mbox{}$ and
$$
	R_t^k = R_{(\ell-1) 2^{-k}}^k
	(I\otimes\Theta_{(\ell-1) 2^{-k}}^*R^k\Theta_{(\ell-1) 2^{-k}})
	\quad\mbox{for }t\in[\ell 2^{-k},(\ell+1)2^{-k}[\mbox{},\quad
	\ell\in\mathbb{N},
$$
where $I\otimes\Theta_{\ell 2^{-k}}^*R^k\Theta_{\ell 2^{-k}}$ is 
viewed as an operator on $\mathcal{F}_{\ell 2^{-k}]}\otimes
(\mathcal{H}\otimes\mathcal{F}_{[\ell 2^{-k}})\cong \mathcal{H}\otimes
\mathcal{F}$.  In words, the discrete evolution $R_t^k$ is an adapted 
unitary process which is piecewise constant on consecutive intervals of 
length $2^{-k}$, and an interaction between the initial system and the 
next field slice occurs at the beginning of every interval.  Our goal is 
to prove that $R_t^k$ converges as $k\to\infty$ to the solution $U_t$ of 
equation (\ref{eq:hplim}).

\begin{remark}
This model coincides precisely with the discrete interaction model
defined above if we choose $\mathcal{H}'=\mathcal{H}^k$, 
$\mathcal{K}'=\mathcal{K}^k$, $R'=R^k$, and $R_\ell'=R^k_{\ell 2^{-n}}$.
\end{remark}

\subsection{A general limit theorem}

Before we proceed to the statement of our main result, we introduce 
certain families of semigroups which will play a central role in our
approach. 

\begin{lemma}
\label{lem:discgen}
For $k\in\mathbb{N}$ and $\psi,\varphi\in\mathcal{F}_{2^{-k}]}$, define
$R^{k;\psi\varphi}:\mathcal{H}\to\mathcal{H}$ such that
$$
	\langle u,R^{k;\psi\varphi}v\rangle =
	\|\psi\|^{-1}\|\varphi\|^{-1}\langle u\otimes\psi,
	R^k\,v\otimes\varphi\rangle\qquad
	\forall\,u,v\in\mathcal{H}.
$$
Then $R^{k;\psi\varphi}$ is a contraction on $\mathcal{H}$ and
$$
	\langle u,(R^{k;\psi\varphi})^{\lfloor t2^k\rfloor}v\rangle =
	\|\psi\|^{-N2^k}\|\varphi\|^{-N2^k}
	\langle u\otimes\psi^{\otimes N2^k},
	R_t^k\,v\otimes\varphi^{\otimes N2^k}\rangle
$$
for all $u,v\in\mathcal{H}$ and $t\in [0,N]$, $N\in\mathbb{N}$.
\end{lemma}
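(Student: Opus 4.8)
First I would dispatch the contraction bound, which is immediate from the defining relation: by the Cauchy--Schwarz inequality and the unitarity of $R^k$,
$$
|\langle u,R^{k;\psi\varphi}v\rangle|
=\|\psi\|^{-1}\|\varphi\|^{-1}\bigl|\langle u\otimes\psi,\,R^k\,v\otimes\varphi\rangle\bigr|
\le \|\psi\|^{-1}\|\varphi\|^{-1}\,\|u\otimes\psi\|\,\|v\otimes\varphi\| = \|u\|\,\|v\|,
$$
so that $\|R^{k;\psi\varphi}\|\le 1$.

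For the iterated identity the plan is to induct on $\ell:=\lfloor t2^k\rfloor$, unwinding the recursive definition of $R^k_t$. Writing $\mathcal{F}\cong\bigotimes_{m\ge1}\mathcal{F}^{(m)}_{2^{-k}]}$ for the decomposition into consecutive slices of length $2^{-k}$, the recursion shows that for $t\in[\ell 2^{-k},(\ell+1)2^{-k}[\mbox{}$ one has $R^k_t=R^{k,(1)}R^{k,(2)}\cdots R^{k,(\ell)}$, where $R^{k,(m)}:=I\otimes\Theta^*_{(m-1)2^{-k}}R^k\Theta_{(m-1)2^{-k}}$ acts, as a shifted copy of $R^k$, only on $\mathcal{H}$ and on the $m$-th slice $\mathcal{F}^{(m)}_{2^{-k}]}$, and as the identity elsewhere; in particular all slices of index $>\ell$ are left untouched. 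Since appending such untouched slices to the test vectors $u\otimes\psi^{\otimes N2^k}$, $v\otimes\varphi^{\otimes N2^k}$ only produces inner products over those slices together with matching normalization constants, it suffices to prove the identity with $N2^k$ replaced by $\ell$; the base case $\ell=0$ is then trivial, since $R^k_t=I$ and both sides equal $\langle u,v\rangle$.

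For the inductive step I would peel off the last factor $R^{k,(\ell)}$, which couples $\mathcal{H}$ only to the $\ell$-th slice. Expanding $R^k(v\otimes\varphi)=\sum_m w_m\otimes\eta_m$ in $\mathcal{H}\otimes\mathcal{F}_{2^{-k}]}$ gives $R^{k,(\ell)}(v\otimes\varphi^{\otimes\ell})=\sum_m w_m\otimes\varphi^{\otimes(\ell-1)}\otimes\eta_m$, and, since $R^{k,(1)}\cdots R^{k,(\ell-1)}=R^k_{(\ell-1)2^{-k}}$ leaves the $\ell$-th slice alone, $R^k_t(v\otimes\varphi^{\otimes\ell})=\sum_m\bigl(R^k_{(\ell-1)2^{-k}}(w_m\otimes\varphi^{\otimes(\ell-1)})\bigr)\otimes\eta_m$. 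Pairing with $u\otimes\psi^{\otimes\ell}$ makes the inner product factorize over the $\ell$-th slice; applying the induction hypothesis (with $\ell-1$ tensor factors) to each term $\langle u\otimes\psi^{\otimes(\ell-1)},R^k_{(\ell-1)2^{-k}}(w_m\otimes\varphi^{\otimes(\ell-1)})\rangle$ and then using $\sum_m\langle\psi,\eta_m\rangle\langle x,w_m\rangle=\langle x\otimes\psi,R^k(v\otimes\varphi)\rangle=\|\psi\|\,\|\varphi\|\,\langle x,R^{k;\psi\varphi}v\rangle$ with $x=((R^{k;\psi\varphi})^{\ell-1})^{*}u$ collapses the sum to $\|\psi\|^{\ell}\|\varphi\|^{\ell}\langle u,(R^{k;\psi\varphi})^{\ell}v\rangle$, which is the claim at level $\ell$.

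The only genuine difficulty here is organizational: one must keep careful track of the shift unitaries $\Theta_{(m-1)2^{-k}}$ and of the continuous tensor product isomorphism $\mathcal{F}\cong\bigotimes_\mathbb{N}\mathcal{F}_{2^{-k}]}$, and of the fact that although the single-slice factors $R^{k,(m)}$ act on pairwise orthogonal field slices, they all couple to the common space $\mathcal{H}$ and hence do not commute, so the factors must be stripped off in the correct order (largest slice index first). Once the indexing is fixed, every individual step is an elementary manipulation of inner products in a tensor product Hilbert space.
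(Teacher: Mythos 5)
Your contraction bound is correct, and your induction on the number of active slices is the right way to flesh out the paper's one-line proof (``follows directly from the unitarity of $R^k$ and the definition of $R^k_t$''): the core identity
$$
\langle u,(R^{k;\psi\varphi})^{\ell}v\rangle=\|\psi\|^{-\ell}\|\varphi\|^{-\ell}\langle u\otimes\psi^{\otimes\ell},R^k_t\,v\otimes\varphi^{\otimes\ell}\rangle,\qquad \ell=\lfloor t2^k\rfloor,
$$
does follow by peeling off the factors $R^{k,(m)}$ exactly as you describe, in the order you describe.

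The gap is in your reduction from $N2^k$ tensor factors to $\ell$ factors. The spectator slices $\ell+1,\dots,N2^k$ do \emph{not} produce ``matching normalization constants'': they contribute a factor $\langle\psi,\varphi\rangle^{N2^k-\ell}$, whereas the normalization in the statement supplies $(\|\psi\|\,\|\varphi\|)^{-(N2^k-\ell)}$, and these cancel only when $\langle\psi,\varphi\rangle=\|\psi\|\,\|\varphi\|$, i.e.\ when $\psi$ and $\varphi$ are positive multiples of one another. Concretely, combining your core identity with the factorization over the spectator slices gives
$$
\|\psi\|^{-N2^k}\|\varphi\|^{-N2^k}\langle u\otimes\psi^{\otimes N2^k},R^k_t\,v\otimes\varphi^{\otimes N2^k}\rangle=\left(\frac{\langle\psi,\varphi\rangle}{\|\psi\|\,\|\varphi\|}\right)^{N2^k-\lfloor t2^k\rfloor}\langle u,(R^{k;\psi\varphi})^{\lfloor t2^k\rfloor}v\rangle,
$$
as you can sanity-check by taking $R^k=I$. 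So the identity as printed is exact only when $\lfloor t2^k\rfloor=N2^k$ or $\psi\parallel\varphi$; a careful write-up must either carry the extra factor $\langle\psi,\varphi\rangle^{-(N2^k-\lfloor t2^k\rfloor)}$ in the prefactor or flag the discrepancy. This is really an imprecision in the lemma as stated rather than in your strategy --- and it is harmless downstream, since in the proof of theorem \ref{thm:main} the corrected factor $\langle\psi^k,\varphi^k\rangle^{N2^k-\lfloor t2^k\rfloor}\to e^{\alpha^*\beta(N-t)}$ is precisely what is needed to recover $\langle u\otimes e(\alpha I_{[0,N]}),U_t\,v\otimes e(\beta I_{[0,N]})\rangle$ --- but your proof asserts the cancellation instead of checking it, and checking it is exactly where the stated formula breaks down.
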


\begin{proof}
This follows directly from the unitarity of $R^k$ and the definition
of $R^k_t$.
\end{proof}

The following counterpart for the limit equation (\ref{eq:hplim})
is proved in \cite[lemma 1]{BHS08}.

\begin{lemma}
\label{lem:gens}
For $\alpha,\beta\in\mathbb{C}^n$, define $T_t^{\alpha\beta}:
\mathcal{H}\to\mathcal{H}$ such that
\begin{equation*}
        \langle u,T_t^{\alpha\beta}v\rangle = e^{-(|\alpha|^2+
                |\beta|^2)t/2}
        \langle u\otimes e(\alpha I_{[0,t]}),
        U_t\,v\otimes e(\beta I_{[0,t]})\rangle
        \qquad \forall\,u,v\in\mathcal{H},~t\ge 0.
\end{equation*}
Then $T_t^{\alpha\beta}$ is a strongly continuous contraction
semigroup on $\mathcal{H}$, and the generator
$\mathscr{L}^{\alpha\beta}$ of this semigroup satisfies
$\mathrm{Dom}(\mathscr{L}^{\alpha\beta})\supset\mathcal{D}$
such that for $u\in\mathcal{D}$
\begin{equation*}
        \mathscr{L}^{\alpha\beta}u =
        \left(\sum_{i,j=1}^n \alpha_i^* N_{ij} \beta_j +
        \sum_{i=1}^n \alpha^*_i M_i +
        \sum_{i=1}^n L_i \beta_i + K
        - \frac{|\alpha|^2+|\beta|^2}{2}
        \right)u.
\end{equation*}
\end{lemma}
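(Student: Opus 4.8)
The plan is to verify the four assertions in turn --- that each $T_t^{\alpha\beta}$ is a contraction, that $(T_t^{\alpha\beta})_{t\ge0}$ is a semigroup, that it is strongly continuous, and that its generator acts as stated on $\mathcal{D}$. Observe first that $\|e(\gamma I_{[0,t]})\|^2=e^{|\gamma|^2t}$, so the scalar prefactor $e^{-(|\alpha|^2+|\beta|^2)t/2}$ in the definition of $T_t^{\alpha\beta}$ is exactly $\|e(\alpha I_{[0,t]})\|^{-1}\|e(\beta I_{[0,t]})\|^{-1}$; thus $T_t^{\alpha\beta}$ is the normalized compression of the unitary $U_t$ between the coherent vectors $e(\alpha I_{[0,t]})$ and $e(\beta I_{[0,t]})$, and the bound $|\langle u,T_t^{\alpha\beta}v\rangle|\le\|u\|\,\|v\|$ follows from Cauchy--Schwarz together with $\|U_t\|=1$; by the Riesz representation theorem this bounded form is implemented by a contraction operator. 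For the semigroup property I would combine the cocycle identity $U_{s+t}=U_s(I\otimes\Theta_s^*U_t\Theta_s)$ with the factorization $e(\gamma I_{[0,s+t]})=e(\gamma I_{[0,s]})\otimes e(\gamma I_{[s,s+t]})$ under $\mathcal{F}\cong\mathcal{F}_{s]}\otimes\mathcal{F}_{[s}$ and the shift covariance $\Theta_s\,e(\gamma I_{[s,s+t]})=e(\gamma I_{[0,t]})$: substituting these into $\langle u\otimes e(\alpha I_{[0,s+t]}),U_{s+t}\,v\otimes e(\beta I_{[0,s+t]})\rangle$ and evaluating the partial inner product over the $\mathcal{F}_{[s}$-factor --- which, by adaptedness of $U_t$, reproduces $\|e(\alpha I_{[0,t]})\|\,\|e(\beta I_{[0,t]})\|\,T_t^{\alpha\beta}$ acting on $\mathcal{H}$ --- yields $T_{s+t}^{\alpha\beta}=T_s^{\alpha\beta}T_t^{\alpha\beta}$ after recombining the normalization constants via $\|e(\gamma I_{[0,s]})\|\,\|e(\gamma I_{[0,t]})\|=\|e(\gamma I_{[0,s+t]})\|$. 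This bookkeeping is somewhat lengthy but routine.

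Strong continuity then comes cheaply: the matrix elements $t\mapsto\langle u,T_t^{\alpha\beta}v\rangle$ are continuous because $t\mapsto U_t$ is strongly continuous and $t\mapsto e(\gamma I_{[0,t]})$ is norm continuous in $\mathcal{F}$, and for a semigroup of contractions the estimate $\|T_t^{\alpha\beta}v-v\|^2\le2\|v\|^2-2\,\mathrm{Re}\,\langle v,T_t^{\alpha\beta}v\rangle\to0$ as $t\downarrow0$ upgrades weak continuity at the origin to strong continuity of the semigroup.

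For the generator, fix $u,v\in\mathcal{D}$; since $\mathfrak{S}$ contains all simple functions, $\alpha I_{[0,1]},\beta I_{[0,1]}\in\mathfrak{S}$. By adaptedness of $U_t$, for $t\le1$ one has $\langle u\otimes e(\alpha I_{[0,t]}),U_t\,v\otimes e(\beta I_{[0,t]})\rangle=e^{-(1-t)\sum_i\alpha_i^*\beta_i}\langle u\otimes e(\alpha I_{[0,1]}),U_t\,v\otimes e(\beta I_{[0,1]})\rangle$, which lets me apply the first fundamental formula of quantum stochastic calculus to (\ref{eq:hplim}) with the \emph{fixed} test functions $\alpha I_{[0,1]}$, $\beta I_{[0,1]}$. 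This expresses $\langle u,T_t^{\alpha\beta}v\rangle$ as $e^{ct}\langle u,v\rangle$ plus an explicit smooth scalar factor times $\int_0^t\langle u\otimes e(\alpha I_{[0,1]}),U_s\,B\,v\otimes e(\beta I_{[0,1]})\rangle\,ds$, where $c=\sum_i\alpha_i^*\beta_i-\tfrac12(|\alpha|^2+|\beta|^2)$ (so $\mathrm{Re}\,c=-\tfrac12|\alpha-\beta|^2\le0$) and $B$ is the operator obtained by inserting the constant values $\alpha_i^*$, $\beta_j$ into the coefficients of (\ref{eq:hplim}); one checks directly that $B=\mathscr{L}^{\alpha\beta}-c$, with $\mathscr{L}^{\alpha\beta}$ the operator displayed in the statement. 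Since $s\mapsto U_s$ is strongly continuous the integrand above is continuous at $s=0$, so differentiating the identity at $t=0$ and collecting terms --- the $c$-dependent contributions of the two summands cancel --- gives $\lim_{t\downarrow0}t^{-1}\langle u,T_t^{\alpha\beta}v-v\rangle=\langle u,\mathscr{L}^{\alpha\beta}v\rangle$.

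The remaining point, which I expect to be the main obstacle, is to promote this weak differentiability against the dense set $\mathcal{D}$ to genuine membership $v\in\mathrm{Dom}(\mathscr{L}^{\alpha\beta})$ for the actual semigroup generator. The key is that the integral representation also furnishes a uniform bound: applying $\|U_s\|=1$ and Cauchy--Schwarz to the integral term and using $\mathrm{Re}\,c\le0$ in the exponential term gives $|\langle u,T_t^{\alpha\beta}v-v\rangle|\le C_v\,t\,\|u\|$ for all $u\in\mathcal{H}$ and $t\in[0,1]$, whence $\|T_t^{\alpha\beta}v-v\|\le C_v t$ and the family $\{t^{-1}(T_t^{\alpha\beta}v-v)\}_{t\in(0,1]}$ is norm bounded in $\mathcal{H}$. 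Combined with convergence of the matrix elements on the dense set $\mathcal{D}$, this forces $t^{-1}(T_t^{\alpha\beta}v-v)\rightharpoonup\mathscr{L}^{\alpha\beta}v$ weakly in $\mathcal{H}$ as $t\downarrow0$. Finally, weak differentiability of the orbit map at the origin implies membership in the domain of the generator: pairing with an arbitrary element of the domain of the adjoint generator and using strong differentiability of the adjoint semigroup's orbit map identifies $v$ as an element of the domain of the double adjoint, which coincides with $\mathrm{Dom}(\mathscr{L}^{\alpha\beta})$ because the generator of a $C_0$-contraction semigroup on a Hilbert space is densely defined and closed. (Equivalently, $v$ is the strong limit of $v_t:=t^{-1}\int_0^tT_s^{\alpha\beta}v\,ds\in\mathrm{Dom}(\mathscr{L}^{\alpha\beta})$ with $\mathscr{L}^{\alpha\beta}v_t=t^{-1}(T_t^{\alpha\beta}v-v)\rightharpoonup\mathscr{L}^{\alpha\beta}v$, and one invokes weak closedness of the graph of a closed operator.) Either way $\mathcal{D}\subset\mathrm{Dom}(\mathscr{L}^{\alpha\beta})$ and $\mathscr{L}^{\alpha\beta}$ acts there by the stated formula.
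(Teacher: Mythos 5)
Your proof is correct. The paper does not prove this lemma itself but defers to \cite[lemma 1]{BHS08}, and your argument --- contractivity via Cauchy--Schwarz, the semigroup law from the cocycle identity combined with the factorization of exponential vectors over $\mathcal{F}_{s]}\otimes\mathcal{F}_{[s}$, the action of the generator on $\mathcal{D}$ from the first fundamental formula applied to (\ref{eq:hplim}), and the upgrade from weak differentiability of the orbit to genuine domain membership via the norm-bounded difference quotients and weak closedness of the generator's graph --- is precisely the standard route taken there.
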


The reason to focus on the semigroups associated to our models is that we
will seek conditions for convergence of the discrete approximations in
terms of the generators $\mathscr{L}^{\alpha\beta}$.  As the latter are
expressed directly in terms of the coefficients of the limit equation
(\ref{eq:hplim}), such conditions can typically be verified in a
straightforward manner and do not require us to work directly with the 
solution $U_t$ of that equation.

The following is the main result of this paper.  The theorem bears strong
resemblance to the Trotter-Kato theorem for contraction semigroups, and
the latter does indeed form the foundation of the proof.  The proof of the
theorem is given in section \ref{sec:proofs}.

\begin{theorem}
\label{thm:main}
Assume that condition \ref{cond:limit} holds, and let 
$\mathcal{D}^{\alpha\beta}\subset 
\mathrm{Dom}(\mathscr{L}^{\alpha\beta})$ be a core for
$\mathscr{L}^{\alpha\beta}$, $\alpha,\beta\in\mathbb{C}^n$.  Then the
following conditions are equivalent.
\begin{enumerate}
\item For all $\alpha,\beta\in\mathbb{C}^n$,
$u\in\mathcal{D}^{\alpha\beta}$ there exist $u^{k}\in\mathcal{H}$
and $\psi^{k},\varphi^{k}\in\mathcal{F}_{2^{-k}]}$ so that
$$
	u^{k}\xrightarrow{k\to\infty}u,\quad
	(\psi^{k})^{\otimes 2^k}\xrightarrow{k\to\infty}
	e(\alpha I_{[0,1]}),\quad
	(\varphi^{k})^{\otimes 2^k}\xrightarrow{k\to\infty}
	e(\beta I_{[0,1]}),
$$
and
$$
	2^k(R^{k;\psi^k\varphi^k}-I)u^k
	\xrightarrow{k\to\infty}\mathscr{L}^{\alpha\beta}u.
$$
\item
For every $T<\infty$ and $\psi\in\mathcal{H}\otimes\mathcal{F}$
$$
	\lim_{k\to\infty}\sup_{0\le t\le T}
	\|R_t^{k*}\psi-U_t^*\psi\|=0.
$$
\end{enumerate}
\end{theorem}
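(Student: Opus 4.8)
The plan is to reduce the statement about unitary cocycles on the large Hilbert space $\mathcal{H}\otimes\mathcal{F}$ to a statement about the contraction semigroups $T_t^{\alpha\beta}$ on the small space $\mathcal{H}$, and then invoke the Trotter-Kato theorem there. Concretely, for fixed $\alpha,\beta$ set $T^{k,\alpha\beta}_t$ to be the piecewise-constant semigroup-like family on $\mathcal{H}$ generated by the contraction $R^{k;\psi^k\varphi^k}$; by Lemma \ref{lem:discgen} its powers encode the matrix elements of $R^k_t$ against coherent-type vectors $(\psi^k)^{\otimes\cdot}$, $(\varphi^k)^{\otimes\cdot}$, while by Lemma \ref{lem:gens} the generator of $T^{\alpha\beta}_t$ encodes the matrix elements of $U_t$ against exponential vectors $e(\alpha I_{[0,t]})$, $e(\beta I_{[0,t]})$. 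Condition (1) is precisely the hypothesis of the Trotter-Kato theorem for discrete approximations (in the form in \cite{EK86}: convergence of discrete generators $2^k(R^{k;\psi^k\varphi^k}-I)$ on a core implies convergence of the iterates $(R^{k;\psi^k\varphi^k})^{\lfloor t2^k\rfloor}$ to $T^{\alpha\beta}_t$, uniformly for $t$ in compacts, strongly on $\mathcal{H}$). So the first step gives: condition (1) implies that for all $\alpha,\beta$ and all $u,v\in\mathcal{H}$,
\begin{equation*}
	\langle u\otimes(\psi^k)^{\otimes\lfloor t2^k\rfloor},
	R_t^k\, v\otimes(\varphi^k)^{\otimes\lfloor t2^k\rfloor}\rangle
	\cdot\|\psi^k\|^{-\lfloor t2^k\rfloor}\|\varphi^k\|^{-\lfloor t2^k\rfloor}
	\xrightarrow{k\to\infty}
	\langle u,T_t^{\alpha\beta}v\rangle,
\end{equation*}
uniformly on compact $t$-intervals.

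The second step is to upgrade this weak (matrix-element) convergence against a restricted family of vectors to strong convergence of $R^{k*}_t\psi$ to $U^*_t\psi$ for \emph{all} $\psi\in\mathcal{H}\otimes\mathcal{F}$. Here the key leverage is that both $R^k_t$ and $U_t$ are \emph{unitary}: a net of unitaries converging weakly on a total set to a unitary converges strongly. First I would fix $\psi=u\otimes e(f)$ with $u\in\mathcal{D}$ and $f\in\mathfrak{S}$ a simple function; by the cocycle/factorization property and the fact that $R^k_t$ is built from interactions with consecutive time slices, one can replace $e(f)$ on each $2^{-k}$-slice by the appropriate $\psi^k$ (chosen for the local value of $f$ there), incurring an error that vanishes because $(\psi^k)^{\otimes 2^k}\to e(\alpha I_{[0,1]})$ — this is where the convergence-of-coherent-vectors hypotheses in (1) are used, together with a continuity estimate for $R^k_t$ uniform in $k$ (which follows from unitarity). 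This reduces matters to the matrix-element convergence established in step one, giving $\langle \phi, R^{k*}_t\psi\rangle\to\langle\phi,U^*_t\psi\rangle$ for $\psi,\phi$ in the total set $\mathcal{D}\times\{e(f):f\in\mathfrak{S}\}$; unitarity of $R^k_t$ and $U_t$ then promotes this to strong convergence for each fixed $t$, and a standard equicontinuity argument (again using the uniform-in-$k$ strong continuity of $R^k_t$, plus strong continuity of $U_t$) upgrades pointwise convergence to uniform convergence on $[0,T]$.

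For the converse, (2) $\Rightarrow$ (1), the plan is simpler: given (2), specialize $\psi=u\otimes e(\alpha I_{[0,1]})$ and $\phi=v\otimes e(\beta I_{[0,1]})$ and read off from the uniform convergence that the rescaled matrix elements of $R^k_t$ converge to those of $U_t$, i.e. the iterates $(R^{k;\psi^k\varphi^k})^{\lfloor t2^k\rfloor}$ converge strongly to $T^{\alpha\beta}_t$ for suitable choices $\psi^k,\varphi^k$ (one must first produce such approximating sequences with $(\psi^k)^{\otimes 2^k}\to e(\alpha I_{[0,1]})$; the obvious candidates are the vacuum-plus-one-photon vectors on a $2^{-k}$ slice scaled like $\sqrt{2^{-k}}\,\alpha$, for which this is a direct computation). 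The converse direction of the Trotter-Kato theorem then yields convergence of the discrete generators on any core, which is condition (1).

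The main obstacle I expect is step two — passing from matrix elements against the special exponential vectors $e(\alpha I_{[0,t]})$ to general vectors $\psi\in\mathcal{H}\otimes\mathcal{F}$, and in particular handling general (non-constant) test functions $f\in\mathfrak{S}$ by splicing together the slice-wise approximations across a growing number $t2^k$ of time slices while keeping the accumulated error under control. This requires a uniform-in-$k$ bound on $\|R^k_t\|=1$ (trivial) but also a genuine uniform-in-$k$ modulus-of-continuity/stability estimate so that the slice-by-slice replacement error does not blow up as the number of slices grows; obtaining that estimate cleanly — most naturally via a telescoping-sum argument exploiting unitarity of each factor — is the crux of the proof.
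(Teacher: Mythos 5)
Your overall architecture coincides with the paper's: reduce to the semigroups $T_t^{\alpha\beta}$ via Lemmas \ref{lem:discgen} and \ref{lem:gens}, apply the Kurtz form of the Trotter--Kato theorem on $\mathcal{H}$, splice slice-wise approximations together to pass from constant $\alpha,\beta$ to simple test functions (the splicing is in fact exact thanks to the factorization of matrix elements coming from the cocycle property, and the replacement error is controlled just by unitarity of $R_t^k$ and $\|\psi_k-e(f)\|\to 0$, so the part you flag as the ``crux'' is actually routine), and then use unitarity to promote weak convergence on a total set to strong convergence of $R_t^{k*}\psi$ for each fixed $t$. The converse direction is also essentially the paper's argument (the paper's choice $\psi^k=e(\alpha I_{[0,2^{-k}]})$ makes $(\psi^k)^{\otimes 2^k}=e(\alpha I_{[0,1]})$ exactly, and the passage from the assumed strong convergence to strong convergence of the iterates is a short Cauchy--Schwarz estimate).

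The genuine gap is the final step of 1$\Rightarrow$2: upgrading pointwise-in-$t$ strong convergence to convergence uniform on $[0,T]$. There is no ``standard equicontinuity argument'' available here, because the family $\{t\mapsto R_t^{k*}\psi\}_k$ is \emph{not} equicontinuous uniformly in $k$: $R_t^k$ is piecewise constant with jumps $I\otimes\Theta_s^*R^k\Theta_s$ that do not tend to the identity as $k\to\infty$ (e.g.\ whenever the limit has a nontrivial scattering component $N_{ij}\ne\delta_{ij}$, the action of $R^k$ on one-photon vectors in a slice stays bounded away from $I$). Writing $R_t^{k*}\psi-R_s^{k*}\psi=(\tilde R^*-I)R_s^{k*}\psi$ with $\tilde R$ the product of interactions over $[s,t]$ shows that an equicontinuity bound would require control of $\|(\tilde R^*-I)\phi\|$ over the non-precompact set of vectors $R_s^{k*}\psi$, which is essentially the statement one is trying to prove. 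The classical route to uniformity in Trotter--Kato (Kurtz's implication 2$\Rightarrow$3) uses the semigroup property, which the cocycles lack. The paper's resolution is a dilation: extend to the two-sided Fock space, set $V_t=\tilde\Theta_tU_t^*$ and $S^k=\tilde\Theta_{2^{-k}}(R^k)^*$, so that the cocycle property turns $V_t$ into a genuine strongly continuous contraction semigroup and $(S^k)^{\lfloor t2^k\rfloor}=\tilde\Theta_{2^{-k}\lfloor t2^k\rfloor}(R_t^k)^*$ into its discrete iterates; since $\tilde\Theta$ is isometric, pointwise convergence transfers, and Kurtz's 2$\Rightarrow$3 then delivers the uniformity. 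Some device of this kind (or an equivalent reduction to a bona fide semigroup) is needed to close your argument.
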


\begin{remark}
As was pointed out to us by a referee, this theorem could also be stated 
outside the framework of quantum stochastic differential equations.
Indeed, a careful reading of the proof reveals that one may dispose of 
condition \ref{cond:limit} entirely and assume only that $U_t$ is any 
unitary cocycle on $\mathcal{H}\otimes\mathcal{F}$.  The theorem is the 
most powerful, however, when the generators $\mathscr{L}^{\alpha\beta}$ 
admit an explicit expression in terms of the parameters of the limit 
model.  This is the case, in particular, when condition \ref{cond:limit} 
is satisfied and $\mathcal{D}$ is a core for all 
$\mathscr{L}^{\alpha\beta}$, $\alpha,\beta\in\mathbb{C}$.  We can then 
choose $\mathcal{D}^{\alpha\beta}\subset\mathcal{D}$, with the 
important consequence that this puts the explicit expression for 
$\mathscr{L}^{\alpha\beta}$ in lemma \ref{lem:gens} at our disposal.  This 
will be the case in all our examples.  Typically existence and uniqueness
proofs for the solution of (\ref{eq:hplim}) already imply that 
$\mathcal{D}$ is a core for $\mathscr{L}^{\alpha\beta}$, see, e.g.,
\cite{Fag90,LiW06} and \cite[remark 4]{BHS08} for further comments.
\end{remark}

\begin{remark}
The assumption in condition \ref{cond:limit} that $U_t$ is a 
\textit{unitary} cocycle can be weakened somewhat. In the absence of this 
assumption, one may still prove the implication 1$\Rightarrow$2 provided 
that strong convergence uniformly on compact intervals is replaced by weak 
convergence of $R_t^k$ to $U_t$ for every time $t$.  We have chosen 
to concentrate on the unitary case, as it is the relevant one for physical 
applications and admits a much stronger result.
\end{remark}

\subsection{A class of discrete interactions}
\label{sec:explicit}

Theorem \ref{thm:main} is a very general result which allows us to infer
the convergence of a sequence of discrete interaction models to the
solution of the Hudson-Parthasarathy equation \ref{eq:hplim}.  The
verification of the conditions of the theorem requires additional work,
however, and the form of the limit coefficients depends on the choice of
the discrete interactions $R^k$.  In this section we introduce a special
class of discrete interaction models (with $\mathcal{H}^k=\mathcal{H}$) 
for which the conditions of theorem \ref{thm:main} can be verified 
explicitly.  In particular, we obtain explicit expressions for the limit 
coefficients.  It should be noted that this class of discrete interaction 
models is physically natural, and we will encounter several concrete
examples in section \ref{sec:ex}.

Let $\mathcal{K}$ be a fixed Hilbert space and suppose that we are given a 
sequence of bounded operators $\pi^k:\mathcal{K}\to\mathcal{F}_{2^{-k}]}$ 
which are partially isometric in the following sense: 
$$
	\pi^{k*}\pi^k=I_{\mathcal{K}},\quad
	\pi^k\pi^{k*}=P_{\mathcal{K}^k}\quad
	\mbox{for all }k,\qquad
	\mbox{where }\mathcal{K}^k:=\mbox{ran}\,\pi^k.
$$ 
Here $P_{\mathcal{K}^k}$ is the orthogonal projection onto $\mathcal{K}^k$ 
and $I_{\mathcal{K}}$ is the identity on $\mathcal{K}$.
The same space $\mathcal{K}$ will play the role of the noise Hilbert space 
for every discrete interaction model $R^k$ ($k\in\mathbb{N}$) after being
isometrically embedded into the limit Hilbert space by the embedding maps 
$\pi^k$.  As we will see, the choice to work with a fixed noise space 
prior to embedding is convenient due to the fact that the limit 
coefficients can be expressed in terms of the matrix elements of certain 
operators on $\mathcal{H}\otimes\mathcal{K}$.

We now define the discrete models.  Let $F_1,\ldots,F_\ell$,
$G_1,\ldots,G_m$, and $H_1,\ldots,H_r$ ($\ell,m,r\in\mathbb{N}$) be
bounded self-adjoint operators on $\mathcal{H}$, and let
$\lambda_1,\ldots,\lambda_\ell$, $\mu_1,\ldots,\mu_m$ and
$\nu_1,\ldots,\nu_r$ be (not necessarily bounded)  self-adjoint operators
on $\mathcal{K}$.  Define
$$
	H^k = 2^k \sum_{j=1}^\ell F_j\otimes\lambda_j +
	2^{k/2}\sum_{j=1}^m G_j\otimes \mu_j + \sum_{j=1}^r 
	H_j\otimes\nu_j
$$
on $\mathcal{H}\oten(\bigcap_j\mathrm{Dom}(\lambda_j)\cap
\bigcap_j\mathrm{Dom}(\mu_j)\cap\bigcap_j\mathrm{Dom}(\nu_j)):= 
\mathcal{D}_0$.

\begin{cond}
\label{cond:expl}
The following are presumed to hold:
\begin{enumerate}
\item $H^k$ is essentially self-adjoint for every $k$.
\item $\check F:=F_1\otimes\lambda_1+\cdots+F_\ell\otimes\lambda_\ell$ is 
essentially self-adjoint on the domain $\mathcal{D}_0$.
\item There is a family of orthonormal vectors $\chi_0,\ldots,\chi_n\in
\mathcal{K}$ such that for all $\alpha\in\mathbb{C}^n$, the vector 
$\chi^k(\alpha):=\chi_0+2^{-k/2}\sum_{j=1}^n\alpha_j\chi_j$ satisfies
$(\pi^k\chi^k(\alpha))^{\otimes 2^k}\to e(\alpha I_{[0,1]})$.
\item $\chi_0\in \mathrm{Dom}(\nu_j)$ for all $j$.
\item $\chi_0\in \mathrm{Dom}(\mu_j)$ and 
$\langle\chi_0,\mu_j\chi_0\rangle=0$ for all $j$.
\item $\chi_0\in \mathrm{Dom}(\lambda_j)$ and $\lambda_j\chi_0=0$ for all $j$.
\end{enumerate}
\end{cond}

We subsequently identify $H^k$ and $\check F$ with their closures.
We now define the discrete interaction unitary 
$R^k:=\pi^ke^{iH^k2^{-k}}\pi^{k*}$ on $\mathcal{H}\otimes\mathcal{K}^k$, 
and extend to $\mathcal{H}\otimes\mathcal{F}$ as usual.

To state the convergence result for this class of discrete interaction 
models we must introduce the corresponding limit coefficients $N_{ij}$, 
$M_i$, $L_i$ and $K$ in equation (\ref{eq:hplim}).  This is what we turn 
to presently.  Define the bounded continuous functions 
$f,g:\mathbb{R}\to\mathbb{C}$ as
$$
	f(x) = \frac{e^{ix}-1}{x},\qquad\quad
	g(x) = \frac{e^{ix}-ix-1}{x^2}.
$$
We must also define the bounded operators $W_{ij},X_i^p,Y_i^p,Z^{pq}$ 
($i,j=1,\ldots,m$, $p,q=1,\ldots,n$) on $\mathcal{H}$ as follows: for 
every 
$u,v\in\mathcal{H}$,
\begin{equation*}
\begin{split}
	&\langle u,W_{ij}\,v\rangle =
	\langle u\otimes\mu_i\chi_0,g(
		\check F)\,v\otimes\mu_j\chi_0\rangle,
	\\
	&\langle u,X_i^p\,v\rangle =
	\langle u\otimes\chi_p,f(
		\check F)\,v\otimes\mu_i\chi_0\rangle,
	\\
	&\langle u,Y_i^p\,v\rangle =
	\langle u\otimes\mu_i\chi_0,f(
		\check F)\,v\otimes\chi_p\rangle,
	\\
	&\langle u,Z^{pq}\,v\rangle =
	\langle u\otimes\chi_p,e^{i\check F}\,v\otimes\chi_q\rangle.
\end{split}
\end{equation*}
It is evident that these operators are well defined provided that 
condition \ref{cond:expl} is assumed to hold.  We now define the limit 
coefficients as follows.
\begin{equation*}
\begin{split}
	&N_{pq} = Z^{pq},\qquad
	M_p = \sum_{i=1}^m X_i^p G_i, \qquad
	L_p = \sum_{i=1}^m G_i Y_i^p, \\
	&K = i\sum_{j=1}^r H_j\,\langle\chi_0,\nu_j\chi_0\rangle
	+ \sum_{i,j=1}^m G_i W_{ij} G_j.
\end{split}
\end{equation*}
The following is the main result of this section.

\begin{theorem}
\label{thm:expl}
Suppose that condition \ref{cond:expl} holds and that condition 
\ref{cond:limit} holds for the limit coefficients $N_{pq},M_p,L_p,K$ as 
defined above (with $\mathcal{D}=\mathcal{H}$).  Then
$$
	\lim_{k\to\infty}\sup_{0\le t\le T}
	\|R_t^{k*}\psi-U_t^*\psi\|=0\quad
	\mbox{for all }\psi\in\mathcal{H}\otimes\mathcal{F},~
	T<\infty.
$$
\end{theorem}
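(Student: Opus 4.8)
The plan is to deduce Theorem~\ref{thm:expl} from Theorem~\ref{thm:main}: it suffices to verify condition~(1) of the latter, since condition~(2) is exactly the assertion to be proved. Because $\mathcal{D}=\mathcal{H}$ and the limit coefficients $N_{pq},M_p,L_p,K$ are bounded, Lemma~\ref{lem:gens} shows that $\mathscr{L}^{\alpha\beta}$ is a bounded operator defined on all of $\mathcal{H}$, so we may take the core $\mathcal{D}^{\alpha\beta}=\mathcal{H}$. Given $u\in\mathcal{H}$ and $\alpha,\beta\in\mathbb{C}^n$, I would choose $u^k=u$, $\psi^k=\pi^k\chi^k(\alpha)$, $\varphi^k=\pi^k\chi^k(\beta)$; the required convergences $(\psi^k)^{\otimes 2^k}\to e(\alpha I_{[0,1]})$ and $(\varphi^k)^{\otimes 2^k}\to e(\beta I_{[0,1]})$ are then precisely Condition~\ref{cond:expl}(3), while $\pi^{k*}\pi^k=I_{\mathcal{K}}$ makes the embeddings cancel, leaving
$$
\langle u,R^{k;\psi^k\varphi^k}v\rangle=
\frac{\langle u\otimes\chi^k(\alpha),\,e^{iH^k2^{-k}}\,v\otimes\chi^k(\beta)\rangle}
{\|\chi^k(\alpha)\|\,\|\chi^k(\beta)\|}\qquad(u,v\in\mathcal{H}).
$$
Everything thus reduces to a short-time expansion of this matrix element. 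Writing $\epsilon:=2^{-k/2}$, $\check G:=\sum_iG_i\otimes\mu_i$, $\check H:=\sum_jH_j\otimes\nu_j$, one has $H^k2^{-k}=\check F+\epsilon\check G+\epsilon^2\check H$, and $\chi^k(\alpha)=\chi_0+\epsilon\,\chi(\alpha)$ with $\chi(\alpha):=\sum_p\alpha_p\chi_p$ (and similarly for $\beta$).

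The core step is to expand $\langle u\otimes\chi^k(\alpha),e^{i(\check F+\epsilon\check G+\epsilon^2\check H)}(v\otimes\chi^k(\beta))\rangle$ to order $\epsilon^2$ by iterating the Duhamel identity $e^{iC}=e^{i\check F}+i\int_0^1 e^{i(1-s)C}(C-\check F)e^{is\check F}\,ds$ around the fast part $\check F$. Splitting both vectors as $\chi_0+\epsilon\chi(\cdot)$ produces four matrix elements, in each of which the $O(\epsilon)$ contribution vanishes: terms carrying a factor $\chi_0$ collapse via $e^{\mp is\check F}(w\otimes\chi_0)=w\otimes\chi_0$ (immediate from $\lambda_j\chi_0=0$, Condition~\ref{cond:expl}(6)), while the only genuinely new $O(\epsilon)$ term, $\sum_j\langle u\otimes\chi_0,\check G(v\otimes\chi_0)\rangle=\sum_j\langle u,G_jv\rangle\langle\chi_0,\mu_j\chi_0\rangle$, vanishes by Condition~\ref{cond:expl}(5). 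The $O(\epsilon^2)$ term is then assembled from the simplex integral $i\int_0^1 e^{is\check F}\,ds=f(\check F)$, which — paired against the $\chi_p$ and $\mu_i\chi_0$ — reproduces $X_i^p,Y_i^p$ and hence $M_p$ and $L_p$; the second-order Duhamel term, whose iterated integral produces $g(\check F)$ and hence the operators $W_{ij}$ and the $\sum_{ij}G_iW_{ij}G_j$ part of $K$; the linear $\check H$-term, which collapses on $\chi_0$ to $i\sum_jH_j\langle\chi_0,\nu_j\chi_0\rangle$; the $\check F$-propagator between $\chi_p$ and $\chi_q$, giving $Z^{pq}=N_{pq}$; and the normalisation $\|\chi^k(\alpha)\|^{-1}\|\chi^k(\beta)\|^{-1}=1-\tfrac{\epsilon^2}{2}(|\alpha|^2+|\beta|^2)+O(\epsilon^4)$, giving the $-\tfrac12(|\alpha|^2+|\beta|^2)$ term. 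Comparing with Lemma~\ref{lem:gens}, the $\epsilon^2$-coefficient is exactly $\langle u,\mathscr{L}^{\alpha\beta}v\rangle$, so $\langle u,R^{k;\psi^k\varphi^k}v\rangle=\langle u,v\rangle+2^{-k}\langle u,\mathscr{L}^{\alpha\beta}v\rangle+o(2^{-k})$.

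The delicate part, and the main obstacle, is to make this expansion rigorous when $\lambda_j,\mu_j,\nu_j$ are unbounded, and to obtain strong rather than merely matrix-element convergence. The formal Dyson terms involve $\check G$ or $\check H$ applied to vectors such as $e^{is\check F}\check G(v\otimes\chi_0)$, which need not lie in the relevant domains; the remedy is that every occurrence of an unbounded operator can be transferred, by self-adjointness, onto a factor $\chi_0$ (which lies in $\mathrm{Dom}(\lambda_j)\cap\mathrm{Dom}(\mu_j)\cap\mathrm{Dom}(\nu_j)$ by Condition~\ref{cond:expl}(4)--(6)) or onto one of the bounded operators $F_j,G_j,H_j$. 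Since $e^{\mp is\check F}$ fixes $w\otimes\chi_0$, any inner propagator adjacent to such a transferred term disappears; the only propagators surviving in the remainders are of the form $e^{isC}$ with $C=H^k2^{-k}$, multiplying enough powers of $\epsilon$ that replacing $e^{isC}$ by $e^{is\check F}$ costs only $o(1)$. This replacement is licensed by strong-resolvent convergence $H^k2^{-k}\to\check F$, which holds because $H^k2^{-k}\to\check F$ pointwise on the common core $\mathcal{D}_0$ (Condition~\ref{cond:expl}(1)--(2)), so the associated unitary groups converge strongly, uniformly for $s\in[0,1]$. Since every error term is bounded by $\|u\|$ times a $u$-independent quantity that is $o(2^{-k})$, the expansion holds in norm: $\|R^{k;\psi^k\varphi^k}v-v-2^{-k}\mathscr{L}^{\alpha\beta}v\|=o(2^{-k})$, that is, $2^k(R^{k;\psi^k\varphi^k}-I)v\to\mathscr{L}^{\alpha\beta}v$ strongly. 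This verifies condition~(1) of Theorem~\ref{thm:main}, whence Theorem~\ref{thm:expl} follows.
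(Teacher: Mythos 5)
Your proposal is correct, and its overall architecture coincides with the paper's: the same reduction to condition (1) of theorem \ref{thm:main} with $\mathcal{D}^{\alpha\beta}=\mathcal{H}$, $u^k=u$, $\psi^k=\pi^k\chi^k(\alpha)$, $\varphi^k=\pi^k\chi^k(\beta)$, the same cancellations via $\lambda_j\chi_0=0$ and $\langle\chi_0,\mu_j\chi_0\rangle=0$, the same appeal to strong resolvent convergence $H^k2^{-k}\to\check F$, and the same observation that all error bounds are uniform over the vector paired against, upgrading matrix-element to norm convergence. Where you genuinely diverge is in the central second-order expansion of $e^{iH^k2^{-k}}$: you iterate the Duhamel formula around the fast part $\check F$, so that $f(\check F)$ and $g(\check F)$ emerge as simplex integrals $i\int_0^1 e^{is\check F}\,ds$ and the corresponding double integral, with remainders carrying the interacting propagator $e^{isH^k2^{-k}}$. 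The paper instead applies the \emph{exact} scalar identities $e^{ix}=1+x\,f(x)=1+ix+x^2g(x)$ to $H^k2^{-k}$ via the spectral theorem, obtaining an identity valid at every finite $k$ in which the divergent powers of $2^{k/2}$ cancel algebraically against the structure of $H^k(u\otimes\chi_0)$; the only limit taken is $f(H^k2^{-k})\to f(\check F)$, $g(H^k2^{-k})\to g(\check F)$, $e^{iH^k2^{-k}}\to e^{i\check F}$ strongly (Reed--Simon VIII.20 and VIII.25). The paper's route buys a shorter argument whose only domain requirement is $u\otimes\chi_0\in\mathrm{Dom}((H^k)^2)$; your route makes the perturbative origin of $f$ and $g$ more transparent but requires the extra care you correctly flag — each Duhamel iteration must be performed on the side carrying a $\chi_0$ factor (where $e^{is\check F}$ acts trivially and conditions \ref{cond:expl}(4)--(6) supply the domains), since a naive iteration from the right would apply $\check G$ to vectors such as $e^{ir\check F}\check G(v\otimes\chi_0)$ that need not lie in its domain. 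With that maneuver in place, your expansion closes and the two proofs deliver the same limit generator.
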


\begin{remark}
We have not sought to develop this result under the most general
conditions possible.  In particular, the boundedness of $F_j,G_j,H_j$ can
certainly be relaxed if appropriate domain assumptions are introduced, and
the proof of the present result is then readily extended. We have chosen
to restrict to the bounded case as the treatment of this case is
particularly transparent, and we have not found one single choice of
domain conditions in the unbounded case which covers all examples of
interest in that setting (particularly straightforward extensions can be
found when either $F_j=0$ for all $j$, or when only the $G_j$ are 
bounded). An illustrative example with unbounded coefficients is developed 
in section \ref{sec:ex} by appealing directly to theorem \ref{thm:main}.
\end{remark}

One might worry about the well-posedness of theorem \ref{thm:expl} as the 
limit coefficients $N_{pq},M_p,L_p,K$ depend on our choice for 
$\chi_0,\ldots,\chi_n$.  For completeness, we provide the following simple 
lemma whose proof can be found in section \ref{sec:proofs}.

\begin{lemma}
\label{lem:uniqueness}
Define $\chi^k(\alpha)$ as in condition \ref{cond:expl} (which we presume 
to be in force).  Suppose that 
$\tilde\chi_0,\ldots,\tilde\chi_n\in\mathcal{K}$ is another
orthonormal family such that for every $\alpha\in\mathbb{C}^n$
$$
	\tilde\chi^k(\alpha):=
	\tilde\chi_0+2^{-k/2}\sum_{j=1}^n\alpha_j\tilde\chi_j
	\quad\mbox{satisfies}\quad
	(\pi^k\tilde\chi^k(\alpha))^{\otimes 2^k}
	\xrightarrow{k\to\infty} e(\alpha I_{[0,1]}).
$$
Then there is a $\phi\in\mathbb{R}$ such that
$\tilde\chi_j=e^{i\phi}\chi_j$ for all $j$.
\end{lemma}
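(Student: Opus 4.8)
The plan is to extract information from the convergence hypothesis at two scales: the ``zeroth order'' term and the ``first order'' term in the expansion of $\chi^k(\alpha)$. First I would consider the case $\alpha=0$, so that $\chi^k(0)=\chi_0$ and $\tilde\chi^k(0)=\tilde\chi_0$. The hypothesis gives $(\pi^k\chi_0)^{\otimes 2^k}\to e(0)=\Omega$ (the vacuum) and likewise $(\pi^k\tilde\chi_0)^{\otimes 2^k}\to\Omega$. Taking inner products, $|\langle\pi^k\chi_0,\pi^k\tilde\chi_0\rangle|^{2^k}=|\langle\chi_0,\tilde\chi_0\rangle|^{2^k}\to\|\Omega\|^2=1$ (using $\pi^{k*}\pi^k=I$), while also $\|\pi^k\chi_0\|^{2^k}=1$ and similarly for $\tilde\chi_0$. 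Since $|\langle\chi_0,\tilde\chi_0\rangle|\le 1$ by Cauchy-Schwarz, the only way the $2^k$-th power can tend to $1$ is if $|\langle\chi_0,\tilde\chi_0\rangle|=1$, which forces $\tilde\chi_0=e^{i\phi}\chi_0$ for some $\phi\in\mathbb{R}$. This is the easy part.

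The main work is to show the \emph{same} phase $\phi$ works for $\chi_1,\dots,\chi_n$. Fix $j\in\{1,\dots,n\}$ and take $\alpha=e_j$ (the $j$-th standard basis vector), so $\chi^k(e_j)=\chi_0+2^{-k/2}\chi_j$ and $\tilde\chi^k(e_j)=\tilde\chi_0+2^{-k/2}\tilde\chi_j=e^{i\phi}\chi_0+2^{-k/2}\tilde\chi_j$. Now I would compute
$$
	\langle\pi^k\chi^k(e_j),\pi^k\tilde\chi^k(e_j)\rangle
	= \langle\chi^k(e_j),\tilde\chi^k(e_j)\rangle
	= e^{i\phi} + 2^{-k/2}\langle\chi_0,\tilde\chi_j\rangle
	+ 2^{-k/2}e^{i\phi}\langle\chi_j,\chi_0\rangle
	+ 2^{-k}\langle\chi_j,\tilde\chi_j\rangle,
$$
where $\langle\chi_j,\chi_0\rangle=0$ by orthonormality. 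Since both $(\pi^k\chi^k(e_j))^{\otimes 2^k}$ and $(\pi^k\tilde\chi^k(e_j))^{\otimes 2^k}$ converge to $e(I_{[0,1]}e_j)$, and $\|e(I_{[0,1]}e_j)\|^2=e^{1}$ (as $\|I_{[0,1]}e_j\|^2=1$), taking the $2^k$-th power of the above inner product must give $e^{1}$ in the limit. Writing $\langle\pi^k\chi^k(e_j),\pi^k\tilde\chi^k(e_j)\rangle = e^{i\phi}(1 + 2^{-k/2}c_k)$ with $c_k = e^{-i\phi}\langle\chi_0,\tilde\chi_j\rangle + o(1)$ bounded, the $2^k$-th power is $e^{i\phi 2^k}(1+2^{-k/2}c_k)^{2^k}$. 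Comparing with the known limits of the individual factors, $\|\pi^k\chi^k(e_j)\|^{2^k}\to e^{1/2}$ forces $(1+2^{-k/2}\langle\chi_0,\chi_j\rangle+2^{-k/2}\langle\chi_j,\chi_0\rangle\cdot 0 + 2^{-k}\|\chi_j\|^2+\cdots)$; more carefully, $\|\pi^k\chi^k(e_j)\|^2 = 1 + 2^{-k}$ so $(1+2^{-k})^{2^k}\to e^1=\|e(\cdot)\|^2$, consistent. The key identity is then that the \emph{ratio} $\langle\pi^k\chi^k(e_j),\pi^k\tilde\chi^k(e_j)\rangle / \|\pi^k\chi^k(e_j)\|\|\pi^k\tilde\chi^k(e_j)\|$ raised to the $2^k$ power must tend to $1$ (since both sequences converge to the same unit-normalized-direction vector $e(I_{[0,1]}e_j)$ after dividing by their norms). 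Expanding this ratio to order $2^{-k}$, the $2^{-k/2}$ terms must cancel in modulus-squared after the power is taken — forcing $e^{-i\phi}\langle\chi_0,\tilde\chi_j\rangle + \langle\chi_j,\chi_0\rangle$ to have vanishing real part at order $2^{-k/2}$, and the order-$2^{-k}$ real part to match; a short computation then isolates $\mathrm{Re}(e^{-i\phi}\langle\chi_0,\tilde\chi_j\rangle)=0$ and, combined with the imaginary part obtained from $\alpha=ie_j$, yields $\langle\chi_0,\tilde\chi_j\rangle=0$, i.e. $\tilde\chi_j\perp\tilde\chi_0$ as expected, plus the crucial relation $\langle\chi_j,\tilde\chi_j\rangle$ at order $2^{-k}$ must equal $e^{-i\phi}$ in real part.

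The cleanest route, which I expect to use to avoid the delicate power-series bookkeeping, is this: after establishing $\tilde\chi_0=e^{i\phi}\chi_0$, write $\tilde\chi_j=e^{i\phi}(a_j\chi_0 + \sum_\ell b_{j\ell}\chi_\ell + w_j)$ with $w_j\perp\mathrm{span}\{\chi_0,\dots,\chi_n\}$ and coefficients to be determined, and substitute into the $2^k$-th-power convergence for general $\alpha$. Since $e(\alpha I_{[0,1]})$ depends analytically and injectively on $\alpha$ in the relevant sense, matching the limits $(\pi^k\chi^k(\alpha))^{\otimes 2^k}\to e(\alpha I_{[0,1]})$ and $(\pi^k\tilde\chi^k(\alpha))^{\otimes 2^k}\to e(\alpha I_{[0,1]})$ at the level of the exponent (take logarithms: $2^k\log\langle\pi^k\chi^k(\alpha),\pi^k\tilde\chi^k(\beta)\rangle \to \langle\alpha I_{[0,1]},\beta I_{[0,1]}\rangle = \langle\alpha,\beta\rangle$ for the standard Fock inner product formula $\langle e(f),e(g)\rangle=e^{\langle f,g\rangle}$) gives a polynomial identity in $\alpha,\beta$ once we expand $\langle\chi^k(\alpha),\tilde\chi^k(\beta)\rangle$ to order $2^{-k}$ and use $\log(1+x)\approx x - x^2/2$. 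The order-$2^{-k/2}$ coefficient (linear in $\alpha$ or $\beta$) must vanish, forcing $a_j=0$ and the cross terms to cancel; the order-$2^{-k}$ coefficient must equal $\langle\alpha,\beta\rangle$, forcing $\sum_\ell b_{j\ell}\bar\beta_\ell$-type terms to reduce to $\bar\beta_j$, i.e. $b_{j\ell}=\delta_{j\ell}$ and $w_j=0$. The main obstacle is purely technical: carefully controlling the $o(2^{-k})$ remainder uniformly so that the formal power-series matching is rigorous, and handling the branch of the logarithm (which is fine since the inner products tend to $1$). Once the matching is justified, the conclusion $\tilde\chi_j=e^{i\phi}\chi_j$ for all $j$ is immediate. $\qquad\blacksquare$
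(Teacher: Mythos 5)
Your proposal follows essentially the same route as the paper: use $\langle e(f),e(g)\rangle=e^{\langle f,g\rangle}$ to deduce $\langle\chi^k(\alpha),\tilde\chi^k(\beta)\rangle^{2^k}\to e^{\alpha^*\beta}$, extract $\tilde\chi_0=e^{i\phi}\chi_0$ from the case $\alpha=\beta=0$, and then match the order-$2^{-k}$ coefficient to conclude $\langle\chi_i,\tilde\chi_j\rangle=e^{i\phi}\delta_{ij}$. (Note that once $\tilde\chi_0=e^{i\phi}\chi_0$, the order-$2^{-k/2}$ cross terms vanish automatically by orthonormality of the two families, so the part of your argument devoted to forcing $\langle\chi_0,\tilde\chi_j\rangle=0$ is not needed.)

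One point in your write-up is actually wrong as stated and needs the fix that the paper makes explicit. You justify taking $2^k\log\langle\pi^k\chi^k(\alpha),\pi^k\tilde\chi^k(\beta)\rangle$ by saying the inner products tend to $1$; they do not --- they tend to $e^{i\phi}$, so the expression carries an oscillating prefactor $e^{i\phi 2^k}$ that your logarithm step ignores. The cure is already contained in your zeroth-order step: from $\langle\chi_0,\tilde\chi_0\rangle^{2^k}\to 1$ you should extract not only $|\langle\chi_0,\tilde\chi_0\rangle|=1$ (which is all you use) but also $e^{i\phi 2^k}\to 1$, which is exactly what kills the prefactor. The paper records this and then sidesteps the branch-of-logarithm bookkeeping entirely by writing the limit as $\exp[e^{-i\phi}\sum_{i,j}\alpha_i^*\langle\chi_i,\tilde\chi_j\rangle\beta_j]=e^{\alpha^*\beta}$, substituting $\alpha\mapsto t\alpha$, and differentiating at $t=0$ to identify the exponents; you may want to adopt that trick, since otherwise you must also rule out the exponents differing by an element of $2\pi i\mathbb{Z}$.
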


Finally, it should be noted that condition \ref{cond:limit} imposes 
stronger assumptions on the noise operators $\mu_j$ and $\lambda_j$ than 
is evident from condition \ref{cond:expl}.  The following lemma provides 
explicit assumptions which guarantee that condition \ref{cond:limit} holds 
for the coefficients $N_{pq},M_p,L_p,K$ as defined in this section.  The 
proof is given in section \ref{sec:proofs}.

\begin{lemma}
\label{lem:hp}
Define $\mathcal{S}=\mathrm{span}\{\chi_1,\ldots,\chi_n\}$, and
suppose that $\mu_j\chi_0 \in \mathcal{S}$ for all $j$.  Suppose moreover 
that $\mathcal{S}\subset\mathrm{Dom}(\lambda_j)$ and that
$\lambda_j\mathcal{S}\subset\mathcal{S}$ for all $j$.  Then the limit 
coefficients $N_{pq},M_p,L_p,K$ satisfy condition \ref{cond:limit}.
\end{lemma}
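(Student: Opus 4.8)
The plan is to verify directly the three algebraic relations of Remark \ref{rem:hp} for the coefficients $N_{pq}=Z^{pq}$, $M_p=\sum_i X_i^pG_i$, $L_p=\sum_iG_iY_i^p$, $K=i\sum_jH_j\langle\chi_0,\nu_j\chi_0\rangle+\sum_{ij}G_iW_{ij}G_j$. Since $\mathcal{D}=\mathcal{H}$ and all the limit coefficients are bounded, Remark \ref{rem:hp} tells us that Condition \ref{cond:limit} is \emph{equivalent} to those relations, so it suffices to check them. The key structural observation to set up first is that under the hypotheses $\mu_j\chi_0\in\mathcal{S}$, $\mathcal{S}\subset\mathrm{Dom}(\lambda_j)$ and $\lambda_j\mathcal{S}\subset\mathcal{S}$, together with Condition \ref{cond:expl} (parts 5 and 6: $\langle\chi_0,\mu_j\chi_0\rangle=0$ and $\lambda_j\chi_0=0$), the operator $\check F=\sum_jF_j\otimes\lambda_j$ leaves $\mathcal{H}\otimes(\mathbb{C}\chi_0\oplus\mathcal{S})$ invariant and, more importantly, \emph{annihilates} $\mathcal{H}\otimes\chi_0$. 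Consequently any function $h(\check F)$ acts as $h(0)=h(0)\cdot I$ on vectors of the form $u\otimes\chi_0$. Writing $P$ for the projection of $\mathcal{K}$ onto $\mathbb{C}\chi_0\oplus\mathcal{S}$ and $\check F_0$ for the (bounded, self-adjoint) restriction of $\check F$ to the finite-dimensional-fiber space $\mathcal{H}\otimes\mathcal{S}$, all the operators $W_{ij},X_i^p,Y_i^p,Z^{pq}$ are matrix elements of functions of $\check F_0$ between the vectors $\chi_p$ and $\mu_i\chi_0$, all of which lie in $\mathcal{H}\otimes\mathcal{S}$.

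With this reduction in hand, I would proceed as follows. First, the gauge relation $\sum_jN_{pj}N_{qj}^*=\sum_jN_{jp}^*N_{jq}=\delta_{pq}I$: this should follow because $e^{i\check F}$ is unitary on $\mathcal{H}\otimes\mathcal{K}$ and maps $\mathcal{H}\otimes\mathcal{S}$ onto itself (since $\check F$ preserves $\mathcal{S}$ and kills $\chi_0$, hence $e^{i\check F}$ restricts to a unitary on the invariant subspace $\mathcal{H}\otimes\mathcal{S}$ and on its orthocomplement within $\mathcal{H}\otimes(\mathbb{C}\chi_0\oplus\mathcal{S})$ it is the identity on $\mathcal{H}\otimes\chi_0$); computing $\sum_j\langle u\otimes\chi_p,e^{i\check F}(|\chi_j\rangle\langle\chi_j|)e^{-i\check F}\,v\otimes\chi_q\rangle$ and using that $\{\chi_1,\dots,\chi_n\}$ is an orthonormal basis of $\mathcal{S}$ (so $\sum_j|\chi_j\rangle\langle\chi_j|=P_{\mathcal{S}}$) together with the invariance $e^{-i\check F}(\mathcal{H}\otimes\chi_p)\subset\mathcal{H}\otimes\mathcal{S}$ gives $\sum_jN_{pj}N_{qj}^*=\langle u\otimes\chi_p,e^{i\check F}P_{\mathcal{S}}e^{-i\check F}\,v\otimes\chi_q\rangle=\langle u\otimes\chi_p,v\otimes\chi_q\rangle=\delta_{pq}\langle u,v\rangle$, and symmetrically for the other order.

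Next, the relation $M_p=-\sum_jN_{pj}L_j^*$: here I would expand $L_j^*=\sum_i(Y_i^j)^*G_i$, note $(Y_i^j)^*$ has matrix element $\langle u\otimes\chi_j,f(\check F)^*\,v\otimes\mu_i\chi_0\rangle=\langle u\otimes\chi_j,\overline{f}(\check F)\,v\otimes\mu_i\chi_0\rangle$, and use the elementary functional identity $e^{ix}\overline{f(x)}=e^{ix}\frac{e^{-ix}-1}{x}=\frac{1-e^{ix}}{x}=-f(x)$, i.e. $e^{i\check F}\overline f(\check F)=-f(\check F)$ as bounded operators (valid on all of $\mathcal{H}\otimes\mathcal{K}$ by the spectral theorem, with the $x=0$ case handled by continuity, $f(0)=i$, $\overline f(0)=-i$, product $=1=-f(0)$... wait: $-f(0)=-i$, and $e^{i\cdot0}\overline f(0)=-i$, consistent). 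Then $\sum_jN_{pj}(Y_i^j)^*$ telescopes via $\sum_j|\chi_j\rangle\langle\chi_j|=P_{\mathcal{S}}$ and the invariance to $\langle u\otimes\chi_p,e^{i\check F}\overline f(\check F)\,v\otimes\mu_i\chi_0\rangle=-\langle u\otimes\chi_p,f(\check F)\,v\otimes\mu_i\chi_0\rangle=-\langle u,X_i^pv\rangle$, so $-\sum_jN_{pj}L_j^*=\sum_iX_i^pG_i=M_p$. Finally, the dissipation relation $K+K^*=-\sum_jL_jL_j^*$: the $\nu$-term in $K$ is skew-adjoint (since $\langle\chi_0,\nu_j\chi_0\rangle\in\mathbb{R}$ as $\nu_j$ is self-adjoint and $H_j=H_j^*$, so $iH_j\langle\chi_0,\nu_j\chi_0\rangle$ has zero real part, i.e. $(iH_j c)+(iH_jc)^*=0$), so it drops out of $K+K^*$, leaving $\sum_{ij}G_i(W_{ij}+W_{ji}^*)G_j$; one computes $\langle u,(W_{ij}+W_{ji}^*)v\rangle=\langle u\otimes\mu_i\chi_0,(g(\check F)+\overline g(\check F))\,v\otimes\mu_j\chi_0\rangle$ and uses $g(x)+\overline{g(x)}=\frac{e^{ix}-ix-1}{x^2}+\frac{e^{-ix}+ix-1}{x^2}=\frac{e^{ix}+e^{-ix}-2}{x^2}=-\left|\frac{e^{ix}-1}{x}\right|^2=-|f(x)|^2=-\overline f(x)f(x)$. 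Hence $g(\check F)+\overline g(\check F)=-f(\check F)^*f(\check F)$, and inserting a resolution of the identity on $\mathcal{K}$ between $f(\check F)^*$ and $f(\check F)$ — crucially, $f(\check F)\,v\otimes\mu_j\chi_0\in\mathcal{H}\otimes\mathcal{S}$ so only the projection $P_{\mathcal{S}}=\sum_p|\chi_p\rangle\langle\chi_p|$ contributes — gives $-\sum_p\langle u\otimes\mu_i\chi_0,f(\check F)^*\,\chi_p\rangle\langle\chi_p,f(\check F)\,v\otimes\mu_j\chi_0\rangle=-\sum_p\langle u,Y_i^p(Y_j^p)^*v\rangle$ (matching $\overline{\langle Y_j^p\cdots\rangle}$ appropriately), so $K+K^*=-\sum_{ijp}G_iY_i^p(Y_j^p)^*G_j=-\sum_p L_pL_p^*$.

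The main obstacle I anticipate is not any single identity but the bookkeeping needed to justify inserting resolutions of the identity and to confirm that every relevant vector stays inside the invariant subspace $\mathcal{H}\otimes\mathcal{S}$ — in particular verifying that $\mu_i\chi_0\in\mathcal{S}$ (hypothesis) and $\lambda_j\mathcal{S}\subset\mathcal{S}$ (hypothesis) really do force $\check F$-invariance of $\mathcal{H}\otimes\mathcal{S}$ with $\chi_0$ in the kernel, so that $\sum_p|\chi_p\rangle\langle\chi_p|$ acts as the identity wherever it is inserted. Once that invariance lemma is cleanly stated, the three relations reduce to the three scalar functional identities $e^{ix}\overline f(x)=-f(x)$, $g(x)+\overline{g(x)}=-|f(x)|^2$, and the triviality $|e^{ix}|=1$, each of which is immediate.
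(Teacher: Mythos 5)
Your proposal is correct and follows essentially the same route as the paper's proof: reduce to the three Hudson--Parthasarathy relations of remark \ref{rem:hp}, exploit that $\mathcal{H}\otimes\mathcal{S}$ is invariant (indeed reducing) for $\check F$ so that the projection onto it can be inserted, and conclude from the scalar identities $|e^{ix}|=1$, $-e^{ix}f(x)^*=f(x)$ and $g(x)+g(x)^*=-|f(x)|^2$. The only cosmetic difference is the order in which you factor $g+\overline{g}$ as $-f(\check F)^*f(\check F)$ versus the paper's $-f(\check F)f(\check F)^*$, which is immaterial since these commute.
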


\begin{remark}
It is not difficult to construct examples of discrete interaction models 
where the limit coefficients $N_{pq},M_p,L_p,K$ do not satisfy the 
Hudson-Parthasarathy conditions in remark \ref{rem:hp}.  In this case, our 
proofs may be modified to show that $R_t^k$ nonetheless converges 
\textit{weakly} to the solution $U_t$ of equation (\ref{eq:hplim}) with 
coefficients $N_{pq},M_p,L_p,K$ as defined in this section.  However, 
in this case the limit evolution $U_t$ will not be unitary, so that the 
physical relevance of such a result is rather limited.
\end{remark}

\section{Examples}
\label{sec:ex}

We illustrate our results using four examples.  The first two examples
reproduce results of Attal and Pautrat \cite{AtP06} and Holevo
\cite{Hol92}.  The third example, that of a linear quantum system,
possesses both unbounded noise operators and unbounded initial
coefficients.  Finally, the fourth example shows that we may
simultaneously approximate the noise and the initial coefficients, as is
often useful in numerical simulations.

Until further notice $\mathcal{H}$ is a fixed Hilbert space.  Note that
for sake of example, all our models live in the Fock space with unit
multiplicity $n=1$.  The extension to multiple channels is entirely
straightforward and leads only to complication of the notation.

\subsection{Spin chain models}
\label{sec:spin}

Define $\mathcal{K} = \mathbb{C}^2$ and denote the canonical basis of 
$\mathbb{C}^2$ as $(\chi_0,\chi_1)$.  The noise Hilbert space is thus that 
of a single spin.  We embed the spin into the Fock space by defining
the embedding map $\pi^k:\mathcal{K}\to\mathcal{F}_{2^{-k}]}$ through
$$
	\pi^k\chi_0 = e(0) = 
	1\oplus \bigoplus_{p=1}^\infty 0,\qquad 
	\pi^k\chi_1 = 0\oplus 2^{k/2}\,I_{[0,2^{-k}]}\oplus
	\bigoplus_{p=2}^\infty 0
$$
(here $I_{[0,2^{-k}]}$ is the indicator function on the interval
$[0,2^{-k}]$).  We now define bounded self-adjoint operators 
$\lambda$, $\mu_1$ and $\mu_2$ on $\mathcal{K}$ as matrices
with respect to the canonical basis:
$$
  \lambda = \begin{pmatrix}
             1 & 0 \\
             0 & 0
            \end{pmatrix},\qquad
  \mu_1 =\sigma_x= \begin{pmatrix}
             0 & 1 \\
             1 & 0
            \end{pmatrix},\qquad
  \mu_2 = \sigma_y =\begin{pmatrix}
             0 & -i \\
             i & 0
            \end{pmatrix}.
$$
Let $F$, $G_1$, $G_2$ and $H$ be arbitrary bounded 
self-adjoint operators on $\mathcal{H}$, and let $H_K$ be a self-adjoint 
operator on $\mathcal{K}$.  Clearly the operator $H^k$ 
defined on $\mathcal{H}\otimes\mathcal{K}$ by
$$
	H^k = 2^k\,F\otimes\lambda +
	2^{k/2}\,(G_1\otimes\mu_1 + G_2\otimes\mu_2) + H\otimes I
	+ I\otimes H_K
$$
is self-adjoint for every $k$, and the domain assumptions in condition
\ref{cond:expl} are trivially satisfied.  Furthermore, we have 
$\lambda\chi_0 = 0$ and $\langle \chi_0, \mu_i\chi_0\rangle = 0$.

For every $\alpha \in \mathbb{C}$, we have defined $\chi^k(\alpha) = 
\chi_0 + 2^{-k/2}\alpha\chi_1$.  Note that
\begin{multline*}
	\|e(\alpha I_{[0,1]}) - (\pi^k\chi^k(\alpha))^{\otimes 2^k}\|^2 =
	\|e(\alpha I_{[0,2^{-k}]})\|^{2^{k+1}}+
	\|\pi^k\chi^k(\alpha)\|^{2^{k+1}} \\ -
	\langle e(\alpha I_{[0,2^{-k}]}),\pi^k\chi^k(\alpha)\rangle^{2^{k}} -
	\langle \pi^k\chi^k(\alpha),e(\alpha I_{[0,2^{-k}]})\rangle^{2^{k}} \\
	= e^{|\alpha|^2} - (1 + |\alpha|^2 2^{-k})^{2^k}
	\xrightarrow{k\to\infty}0.
\end{multline*}
We conclude that $(\pi^k\chi^k(\alpha))^{\otimes 2^k} 
\xrightarrow{k\to\infty}e(\alpha I_{[0,1]})$.  We have now verified all 
parts of condition 
\ref{cond:expl}.  Moreover, the coefficients $N_{11}$, $M_1$, $L_1$ and 
$K$ are easily computed:
\begin{equation*}
\begin{split}
  & N_{11} = e^{iF},\qquad M_1 = \frac{e^{iF}-I}{F}\,(G_1-iG_2),\qquad
  L_1 = (G_1+ iG_2)\,\frac{e^{iF}-I}{F}, \\
  &K = iH + i\langle\chi_0,H_K\chi_0\rangle\,I
  + (G_1 +iG_2)\,\frac{e^{iF} -iF -I}{F^2}\,(G_1-iG_2).
\end{split}
\end{equation*}
Note that these coefficients satisfy condition \ref{cond:limit} by virtue
of remark \ref{rem:hp}. It follows from theorem \ref{thm:expl} that the
repeated interaction $R_t^{k*}$ obtained from the Hamiltonian 
$H^k$ converges strongly to the solution $U_t^*$ of equation 
(\ref{eq:hplim}) with these coefficients, uniformly on compact
time intervals.  This result corresponds to \cite[theorem 19]{AtP06}.  

\subsection{Time-ordered exponentials}

Let $F,G,H$ be bounded operators on $\mathcal{H}$ with
$F,H$ self-adjoint, and define the essentially self-adjoint operators 
$$
	\Delta M^k_\ell := 
	F\,(\Lambda_{t_\ell}-\Lambda_{t_{\ell-1}})+
	G\,(A^\dag_{t_\ell}-A^\dag_{t_{\ell-1}})+
	G^*\,(A_{t_\ell}-A_{t_{\ell-1}})+
	H\,2^{-k}
$$
on $\mathcal{H}\oten\mathcal{E}$, where $t_\ell=\ell 2^{-k}$.  Holevo 
\cite{Hol92} defines time-ordered stochastic exponentials as the following 
strong limits on $\mathcal{H}\otimes\mathcal{F}$:
$$
	\overleftarrow{\mbox{exp}}\left[-i\int_0^t(F\,d\Lambda_s+
	G\,dA^\dag_s+G^*\,dA_s+H\,ds)\right] :=
	\mathop{\mbox{s-lim}}_{k\to\infty}\,
	e^{-i\Delta M_{\lfloor t2^k\rfloor}^k}\cdots 
	e^{-i\Delta M_1^k}.
$$
Evidently this definition can be interpreted as the limit of a sequence of 
discrete interaction models, and the limit process does indeed solve an
equation of the form (\ref{eq:hplim}).  In this example we develop this 
idea by applying theorem \ref{thm:expl}.

Define $\mathcal{K} = \mathcal{F}_{1]} = \Gamma_s(L^2([0,1]))$, and choose
the orthonormal vectors $\chi_0$ and $\chi_1$ as
\begin{equation*}
	  \chi_0 = e(0) = 1 \oplus \bigoplus_{p=1}^\infty 0,
	  \qquad \chi_1 = 0 \oplus
	  I_{[0,1]}\oplus\bigoplus_{p=2}^\infty 0.
\end{equation*}
We define embedding maps $\pi^k: \mathcal{K} \to \mathcal{F}_{2^{-k}]}$
by specifying their action on the (total) family of exponential vectors
$e(f)\in\mathcal{F}_{1]}$, $f\in L^2([0,1])$ as follows:
\begin{equation*}
	\pi^k e(f) = e(f^k),\qquad 
	f^k(x) = 2^{k/2} f(2^{k}x)
\end{equation*}
(note that $f^k\in L^2([0,2^{-k}])$).  It is easily verified that $\pi^k$ 
is a unitary map for every $k$.
We define self-adjoint operators $\mu_1,\mu_2$ and $\lambda$ on 
$\mathcal{K}$ as
\begin{equation*}
	  \mu_1 = A_1+A^\dag_1, \qquad
	  \mu_2 = i(A_1 -A^\dag_1), \qquad \lambda = \Lambda_1,
\end{equation*}
where $A_1, A^\dag_1$ and $\Lambda_1$
are the standard Hudson-Parthasarathy noises
evaluated at time $t=1$.  Define $G_1 = (G+G^*)/2$ and $G_2=i(G-G^*)/2$,
and set
\begin{equation*}
	  H^k := 2^k\,\pi^{k*}\Delta M^k_1\pi^k
	  = 2^k \,F \otimes\lambda  +
	  2^{k/2}\,(G_1\otimes\mu_1 + G_2\otimes\mu_2) + H\otimes I.
\end{equation*}
It is well known that $H^k$ and $\check F = F\otimes\lambda$ are 
essentially self-adjoint, and it is easily verified that 
$\langle\chi_0,\mu_j\chi_0\rangle=0$ and $\lambda\chi_0=0$.
Moreover, as $\pi^k\chi_0$ and $\pi^k\chi_1$ coincide with their 
counterparts in the previous example, we have verified that
condition \ref{cond:expl} holds.  The coefficients $N_{11}, M_1, L_1$ and 
$K$ are now easily computed:
\begin{equation*}
\begin{split}
 	& N_{11} = e^{iF},\qquad  M_1 = \frac{e^{iF}-I}{F}\,G,
  	\qquad L_1 = G^*\, \frac{e^{iF}-I}{F}, \\
	&  K = i H + G^*\, \frac{e^{iF}-I -  iF}{F^2}\,G.
\end{split}
\end{equation*}
These coefficients satisfy condition \ref{cond:limit} by virtue
of remark \ref{rem:hp}. It follows from theorem \ref{thm:expl} that the
time ordered exponential defined above coincides with the adjoint solution 
$U_t^*$ of equation (\ref{eq:hplim}) with these coefficients.
This agrees with \cite[corollary 1]{Hol92}.

\subsection{Linear quantum systems}

Let $\mathcal{H}=\ell^2(\mathbb{Z}_+)$ and denote by 
$(\eta_k,k\in\mathbb{Z}_+)$ the canonical basis in $\ell^2(\mathbb{Z}_+)$.
We also choose the domain 
$\mathcal{D}=\mathrm{span}\{\eta_k:k\in\mathbb{Z}_+\}\subset\mathcal{H}$
of finite particle vectors.  On $\mathcal{D}$ we define the operators
$$
	a\,\eta_k = \sqrt{k}\,\eta_{k-1},\qquad
	a^*\,\eta_k = \sqrt{k+1}\,\eta_{k+1},\qquad
	q = a+a^*,\qquad
	p = i(a-a^*).
$$
Note that $a$ is the annihilation operator and $a^*$ is the creation 
operator, while $q$ and $p$ are the position and momentum operators, 
respectively.

A linear quantum system is a quantum stochastic differential equation of 
the form (\ref{eq:hplim}) whose coefficients take the following form on
$\mathcal{D}$:
\begin{equation*}
\begin{split}
	&N_{11}=I,\quad
	M_1 = m\,p+m'\,q,\quad
	L_1 = -m^*\,p-{m'}^*\,q,\quad
	K = iH + \frac{1}{2}L_1M_1,\\
	& H = k_1p^2 + k_2(pq+qp)+k_3q^2 + k_4p+k_5q + k_6I,
\end{split}
\end{equation*}
where $m,m'\in\mathbb{C}$ and $k_1,\ldots,k_6\in\mathbb{R}$.  Physically,
a linear quantum system is a model whose Hamiltonian is quadratic in 
position and momentum and whose noise coefficients are linear in position 
and momentum.  At least formally, one may easily verify that the 
Hudson-Parthasarathy conditions of remark \ref{rem:hp} are satisfied, but 
the coefficients are unbounded in this case.  That condition 
\ref{cond:limit} is satisfied in this setting is proved in \cite{Fag90}.

Linear quantum systems possess various special properties: for example,
the adjoint solution $U_t^*$ of equation (\ref{eq:hplim}) leaves the 
family of Gaussian states in $\mathcal{H}\otimes\mathcal{F}$ invariant,
the Heisenberg evolution of the observables $(q,p)$ has an explicit 
solution, and the quantum filtering problem for (\ref{eq:hplim}) has a
finite-dimensional realization (the Kalman filter).  Because of these
and other properties, the linear quantum systems play a special role in 
quantum engineering as they admit particularly tractable methods for 
control synthesis and signal analysis \cite{JNP07,JNP07b,GSDM03}.  In 
these applications it could be of significant interest to work with 
discrete time approximations (e.g., for the purpose of digital signal 
processing), but it is important to seek approximations which preserve the 
linear systems properties of these models.  This is easily done, but we 
necessarily obtain discrete approximations where both the initial system 
coefficients and the discrete noises are unbounded.  In this example we 
will prove the convergence of such unbounded discrete approximations by 
appealing directly to our main theorem \ref{thm:main}.

We begin, however, by setting up our discrete models as in section 
\ref{sec:explicit}.  Let $\mathcal{K} = \ell^2(\mathbb{Z}_+)$, and choose 
an embedding $\pi^k:\mathcal{K}\to\mathcal{F}_{2^{-k}]}$ by setting
$$
        \pi^k\eta_0 = 
        1\oplus \bigoplus_{p=1}^\infty 0,\qquad
	\pi^k\eta_\ell = 
	\bigoplus_{p=0}^{\ell-1}0\oplus
	(2^{k/2}\,I_{[0,2^{-k}]})^{\otimes \ell}\oplus
	\bigoplus_{p=\ell+1}^{\infty}0
	\quad(\ell\in\mathbb{N}).
$$
If we define $\eta^k(\alpha):=\eta_0+2^{-k/2}\alpha\,\eta_1$ for 
$\alpha\in\mathbb{C}$, then we find precisely as in the previous examples
that $(\pi^k\eta^k(\alpha))^{\otimes 2^k}\to e(\alpha I_{[0,1]})$ as
$k\to\infty$.

Let us now define on $\mathcal{D}\oten\mathcal{D}$ the operators
$$
	H^k = H\otimes I
	-i\,(M_1\otimes a^* + L_1\otimes a)\,2^{k/2}.
$$
One may verify that $H^k$ is symmetric and that 
$\mathcal{D}\oten\mathcal{D}$ is a domain of analytic vectors for $H^k$, 
so that in particular $H^k$ is essentially self-adjoint for every $k$ 
\cite[section X.6]{ReS75}.  We will subsequently identify these operators 
with their closures.  Note that the Hamiltonian $H^k$ is quadratic in the 
family of position and momentum operators of the initial system and of the 
discrete noise; therefore the discrete interaction model is itself a 
(discrete time) linear system, and it therefore possesses all the 
associated desirable properties.

We now define the discrete interaction unitary $R^k$ on 
$\mathcal{H}\otimes\mathcal{F}$ from the Hamiltonian $H^k$ as in section 
\ref{sec:explicit}.  We will use theorem \ref{thm:main} to prove that
$$
	\lim_{k\to\infty}\sup_{0\le t\le T}
	\|R_t^{k*}\psi-U_t^*\psi\|=0\quad
	\mbox{for all }\psi\in\mathcal{H}\otimes\mathcal{F},~
	T<\infty,
$$
where $U_t$ is the solution of equation (\ref{eq:hplim}) with the
coefficients $N_{11},M_1,L_1,K$ defined above.
Note that theorem \ref{thm:expl} does not apply as the initial 
coefficients are unbounded, but we may essentially repeat the proof of 
that theorem with minor modifications to obtain the present result.
To this end, we begin by noting that $\mathcal{D}$ is a core for
$\mathscr{L}^{\alpha\beta}$ ($\alpha,\beta\in\mathbb{C}$) by the analytic 
vector theorem (see \cite[remark 4]{BHS08}).  Let us fix 
$\alpha,\beta\in\mathbb{C}$, and define $\psi^k=\pi^k\eta^k(\alpha)$ and
$\varphi^k=\pi^k\eta^k(\beta)$.  By theorem \ref{thm:main}, it suffices to 
prove that
$$
	\|2^k(R^{k;\psi^k\varphi^k}-I)u-\mathscr{L}^{\alpha\beta}u\|
	\xrightarrow{k\to\infty}0
$$
for every $u\in\mathcal{D}$.  We now proceed as follows.
Fix $u\in\mathcal{D}$, and note that using the trivial identities $e^{ix} 
= 1 + f(x)\,x = 1 + ix + g(x)\,x^2$ we can write
\begin{multline*}
	e^{iH^k2^{-k}}\,u\otimes\eta_\ell =
	(I+2^{-k}\,f(H^k2^{-k})\,H^k)\,u\otimes\eta_\ell \\ = 
	(I+iH^k2^{-k}+2^{-2k}\,g(H^k2^{-k})\,(H^k)^2)\,u\otimes\eta_\ell.
\end{multline*}
Here we have used the spectral theorem and the fact that 
$u\otimes\eta_\ell\in\mathrm{Dom}((H^k)^p)$ for every $\ell,p$
(see the proof of theorem \ref{thm:expl} for a more precise argument).
Therefore
\begin{equation*}
\begin{split}
	&\langle v\otimes\eta^k(\alpha),2^k(e^{iH^k2^{-k}}-I)\,
	u\otimes\eta^k(\beta)\rangle = 
	i\langle v,Hu\rangle  \\
	&\qquad\qquad
	- \langle v\otimes\eta_0,g(H^k2^{-k})\,
	(M_1\otimes a^*+L_1\otimes a)\,M_1u\otimes\eta_1\rangle \\
	&\qquad\qquad
	-i\,\alpha^*\,
	\langle v\otimes\eta_1,f(H^k2^{-k})\,M_1u\otimes\eta_1\rangle
	\\
	&\qquad\qquad
	-i\,
	\langle v\otimes\eta_0,f(H^k2^{-k})\,(M_1\otimes a^*+L_1\otimes a)
	\,u\otimes\eta_1\rangle
	\,\beta \\
	&\qquad\qquad
	+ \alpha^*
	\langle v\otimes\eta_1,(e^{iH^k2^{-k}}-I)\,
        u\otimes\eta_1\rangle\,\beta
	+ O(\|v\|\,2^{-k/2}).	
\end{split}
\end{equation*}
A straightforward computation shows that
$$
	\sup_{\stackrel{v\in\mathcal{H}}{\|v\|\le 1}}
	|\langle v\otimes\eta^k(\alpha),2^k(e^{iH^k2^{-k}}-I)\,
        u\otimes\eta^k(\beta)\rangle -
	\langle v, (\alpha^*\,M_1+L_1\,\beta+K)
	\,u\rangle|\xrightarrow{k\to\infty}0,
$$
where we have used that $f(H^k2^{-k})\to iI$,
$g(H^k2^{-k})\to -(1/2)I$ and $e^{iH^k2^{-k}}\to I$ strongly 
as $k\to\infty$ by \cite[theorems VIII.20 and VIII.25]{ReS80}.
As in the proof of theorem \ref{thm:expl} below, it follows
readily that 
$\|2^k(R^{k;\psi^k\varphi^k}-I)u-\mathscr{L}^{\alpha\beta}u\|\to 0$.

\subsection{Finite dimensional approximations}

In the previous examples we have approximated the quantum stochastic
differential equation (\ref{eq:hplim}) by constructing discrete
interaction models whose interaction unitary $R^k$ lives on the Hilbert
space $\mathcal{H}\otimes\mathcal{K}^k$.  Even though the Fock space
$\mathcal{F}$ is infinite dimensional, we have seen that we may choose
finite-dimensional discrete noise spaces as simple as
$\mathcal{K}^k\cong\mathbb{C}^2$.  In numerical applications, however, we
typically wish to go one step further and approximate also the initial
space $\mathcal{H}$ (which is often infinite dimensional) by a finite
dimensional space $\mathcal{H}^k$.  The discrete interaction models then
live entirely on the finite dimensional Hilbert spaces
$\mathcal{H}^k\otimes\mathcal{K}^k$, as is desirable for numerical 
implementation.  We would like to establish that these discrete models 
converge to the solution of the limit equation (\ref{eq:hplim}) when we 
simultaneously let the time step go to zero and let the initial space 
dimension go to infinity.  We will now show in a toy example that this 
problem fits into the setting of theorem \ref{thm:main}.

We will discretize the noise essentially as in the example of section 
\ref{sec:spin}, but let us directly embed the discrete models (rather than 
work with the embedding maps $\pi^k$) to simplify the notation.
For every $k\in\mathbb{N}$, define the vectors $\chi^k_0,\chi^k_1\in
\mathcal{F}_{2^{-k}]}$ as
$$
	\chi_0^k =
	1\oplus \bigoplus_{p=1}^\infty 0,\qquad 
	\chi_1^k = 0\oplus 2^{k/2}\,I_{[0,2^{-k}]}\oplus
	\bigoplus_{p=2}^\infty 0.
$$
Moreover, let $\mathcal{H}$ be an infinite dimensional separable Hilbert 
space, and fix an orthonormal basis 
$\{\eta_k:k\in\mathbb{N}\}\subset\mathcal{H}$.  For the $k$th discrete 
interaction model we will choose the initial Hilbert space
$\mathcal{H}^k:=\mathrm{span}\{\eta_1,\ldots,\eta_k\}\subset\mathcal{H}$ 
and the noise Hilbert space $\mathcal{K}^k:=\mathrm{span}\{\chi_0^k,
\chi_1^k\}\subset\mathcal{F}_{2^{-k}]}$.  Thus 
$\mathcal{H}^k\otimes\mathcal{K}^k$ is indeed finite dimensional.

Let $H$ and $M$ be bounded operators on $\mathcal{H}$ where $H$ is 
self-adjoint.  In this toy example, we will be interested in approximating 
the solution of the limit equation
$$
	dU_t = U_t\,\{M\,dA_t^\dag - M^*\,dA_t + iH\,dt 
	- \tfrac{1}{2}M^*M\,dt\},
$$
which satisfies condition \ref{cond:limit} by virtue
of remark \ref{rem:hp}.  We claim that this can be done by choosing the 
discrete interaction unitaries
$$
	R^k = \exp\{
		2^{-k/2}\,(P_kMP_k\otimes b_k^* - P_kM^*P_k\otimes b_k)
		+ i\,2^{-k}\,P_kHP_k\otimes I
	\},
$$
where $P_k$ denotes the orhogonal projection onto $\mathcal{H}^k$ and
the noise operator $b_k$ is defined by setting $b_k\chi_0^k=0$, 
$b_k\chi_1^k=\chi_0^k$, and $b_k\psi=0$ for $\psi\perp\mathcal{K}^k$.

We simply verify the conditions of theorem \ref{thm:main}.  As all the 
coefficients are bounded, we may choose 
$\mathcal{D}^{\alpha\beta}=\mathcal{D}=\mathcal{H}$.  Fix 
$\alpha,\beta\in\mathbb{C}$ and $u\in\mathcal{H}$, and let us define
$\psi^k=\chi_0^k+2^{-k/2}\alpha\chi_1^k$ and 
$\varphi^k=\chi_0^k+2^{-k/2}\beta\chi_1^k$.  As in the previous examples 
we have 
$$
	(\psi^{k})^{\otimes 2^k}\xrightarrow{k\to\infty}
	e(\alpha I_{[0,1]}),\qquad
	(\varphi^{k})^{\otimes 2^k}\xrightarrow{k\to\infty}
	e(\beta I_{[0,1]}).
$$
As $P_ku\to u$ as $k\to\infty$, it suffices to verify that
$$
	2^k(R^{k;\psi^k\varphi^k}-I)P_ku
	\xrightarrow{k\to\infty}\mathscr{L}^{\alpha\beta}u.
$$
What remains is again essentially the same computation as in the previous 
example and as in the proof of theorem \ref{thm:expl}.  We leave the 
details to the reader.

\section{Proofs}
\label{sec:proofs}

\subsection{Proof of theorem \ref{thm:main}}

The proof of theorem \ref{thm:main} is based on a version of the 
Trotter-Kato theorem for contraction semigroups due to Kurtz.  We cite 
here from \cite[theorem 1.6.5]{EK86} a special case of this result in the 
form which will be convenient in the following.

\begin{theorem}[Kurtz]
\label{thm:kurtz}
Let $\mathcal{H}$ be a fixed Hilbert space.  For $k\in\mathbb{N}$, let 
$T^k$ be a linear contraction on $\mathcal{H}$ and let $T_t$ be a strongly 
continuous contraction semigroup on $\mathcal{H}$ with generator
$\mathscr{L}$.  Let $\mathcal{D}$ be a core for $\mathscr{L}$.  Then the 
following are equivalent.
\begin{enumerate}
\item For every $u\in\mathcal{D}$, there exist $u^k\in\mathcal{H}$ such
that
$$
	u^k\xrightarrow{k\to\infty}u,\qquad
	2^k(T^k-I)u^k\xrightarrow{k\to\infty}\mathscr{L}u.
$$
\item For every $\psi\in\mathcal{H}$ and $t<\infty$
$$
	\lim_{k\to\infty}
	\|(T^k)^{\lfloor t2^k\rfloor}\psi-T_t\psi\|=0.
$$
\item For every $\psi\in\mathcal{H}$ and $t<\infty$
$$
	\lim_{k\to\infty}\sup_{s\le t}
	\|(T^k)^{\lfloor s2^k\rfloor}\psi-T_s\psi\|=0.
$$
\end{enumerate}
\end{theorem}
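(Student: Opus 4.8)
The plan is to recast the discrete iterations in terms of genuine $C_0$-contraction semigroups and then invoke the classical continuous-time Trotter--Kato theorem. Put $A_k:=2^k(T^k-I)$; this is a bounded operator generating the contraction semigroup $t\mapsto e^{tA_k}=e^{-t2^k}e^{t2^kT^k}$ on $\mathcal{H}$ (contractivity from $\|e^{t2^kT^k}\|\le e^{t2^k}$), and accordingly $\|(\lambda-A_k)^{-1}\|\le\lambda^{-1}$ for $\lambda>0$ by Hille--Yosida. Writing $R(1):=(1-\mathscr{L})^{-1}$, I would prove the cycle $(1)\Rightarrow(3)\Rightarrow(2)\Rightarrow(1)$ — the implication $(3)\Rightarrow(2)$ being immediate — using the strong convergence $(1-A_k)^{-1}\to R(1)$ as the pivot in both non-trivial directions.

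For $(2)\Rightarrow(1)$: expanding a Neumann series gives $(1-A_k)^{-1}=2^{-k}(1+2^{-k})^{-1}\sum_{j\ge 0}(1+2^{-k})^{-j}(T^k)^j$, and since the $j$-th term is constant on $[\,j2^{-k},(j+1)2^{-k})$ this equals
\[
	(1-A_k)^{-1}\psi=\int_0^\infty (1+2^{-k})^{-\lfloor s2^k\rfloor-1}(T^k)^{\lfloor s2^k\rfloor}\psi\,ds .
\]
As $k\to\infty$ the integrand tends pointwise to $e^{-s}T_s\psi$ — by $(1+2^{-k})^{2^k}\to e$ and condition $(2)$ — and is dominated by an integrable function independent of $k$, so dominated convergence yields $(1-A_k)^{-1}\psi\to\int_0^\infty e^{-s}T_s\psi\,ds=R(1)\psi$ for every $\psi$. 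Given this, for $u\in\mathcal{D}\subset\mathrm{Dom}(\mathscr{L})$ put $u^k:=(1-A_k)^{-1}(1-\mathscr{L})u$; then $u^k\to R(1)(1-\mathscr{L})u=u$, while $A_k(1-A_k)^{-1}=(1-A_k)^{-1}-I$ gives $A_ku^k=(1-A_k)^{-1}(1-\mathscr{L})u-(1-\mathscr{L})u\to u-(1-\mathscr{L})u=\mathscr{L}u$, which is condition $(1)$. (Coreness of $\mathcal{D}$ plays no role here.)

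For $(1)\Rightarrow(3)$: first establish the resolvent convergence. If $u\in\mathcal{D}$, $u^k\to u$ and $2^k(T^k-I)u^k\to\mathscr{L}u$, then $(1-A_k)u^k\to(1-\mathscr{L})u$, and applying the contraction $(1-A_k)^{-1}$ shows $(1-A_k)^{-1}(1-\mathscr{L})u\to u=R(1)(1-\mathscr{L})u$; since $\mathcal{D}$ is a core, $(1-\mathscr{L})\mathcal{D}$ is dense, so together with $\|(1-A_k)^{-1}\|\le 1$ we get $(1-A_k)^{-1}\to R(1)$ strongly on $\mathcal{H}$. By the Trotter--Kato theorem for contraction semigroups this implies $e^{tA_k}\psi\to T_t\psi$ for every $\psi$, uniformly for $t$ in compact sets. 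To pass to the discrete iterates I would use the Chernoff estimate: for a contraction $T$ and $n\in\mathbb{N}$, $\|(T^n-e^{n(T-I)})\psi\|\le\sqrt{n}\,\|(T-I)\psi\|$ — proved by writing $e^{n(T-I)}=e^{-n}\sum_m\tfrac{n^m}{m!}T^m$, using $\|(T^n-T^m)\psi\|\le|n-m|\,\|(T-I)\psi\|$, and Cauchy--Schwarz against Poisson$(n)$ weights (of variance $n$). Taking $T=T^k$, $n=\lfloor t2^k\rfloor$, and controlling separately the difference between $e^{\lfloor t2^k\rfloor(T^k-I)}$ and $e^{tA_k}$ (whose exponents differ by $(t2^k-\lfloor t2^k\rfloor)(T^k-I)$), one obtains
\[
	\|(T^k)^{\lfloor t2^k\rfloor}\psi-e^{tA_k}\psi\|\le(\sqrt{t}\,2^{-k/2}+2^{-k})\,\|A_k\psi\| .
\]
For arbitrary $\psi$, approximate $\psi$ by $w\in\mathrm{Dom}(\mathscr{L})$ and set $w^k:=(1-A_k)^{-1}(1-\mathscr{L})w$, so $w^k\to w$ and $\sup_k\|A_kw^k\|<\infty$; splitting $\psi$ into $w^k$ and $\psi-w^k$, using contractivity of $(T^k)^{\lfloor t2^k\rfloor}$ and $e^{tA_k}$, the Chernoff bound (uniform on $t\in[0,T]$ after replacing $\sqrt{t}$ by $\sqrt{T}$), the continuous convergence, and finally letting $\|\psi-w\|\to 0$, yields condition $(3)$.

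The two substantive ingredients are the Chernoff inequality — a short computation — and the continuous-time Trotter--Kato theorem, namely that strong resolvent convergence plus uniform equiboundedness of the $e^{tA_k}$ imply strong semigroup convergence, uniformly on compacts. The latter is the main obstacle for a fully self-contained proof: one route is the exponential formula $e^{tA_k}=\lim_m(I-\tfrac{t}{m}A_k)^{-m}$ together with an error bound uniform in $k$ — the delicate point being that $\|A_k^2v\|$ is uncontrolled, forcing one to smooth (e.g.\ work with $v$ in the range of $R(1)^2$ and exploit $\|(\lambda-A_k)^{-m}\|\le\lambda^{-m}$) — and another is a Laplace-transform uniqueness argument yielding weak and then strong convergence. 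In the present context one may simply cite \cite[theorem 1.6.1]{EK86}.
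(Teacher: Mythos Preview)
The paper does not prove this theorem at all: it is stated as a citation of \cite[theorem 1.6.5]{EK86} and used as a black box in the proof of theorem~\ref{thm:main}. There is therefore no ``paper's own proof'' to compare against.

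Your argument is correct and is essentially the standard proof one finds in Ethier--Kurtz. The reduction to resolvent convergence via the Neumann-series/Laplace-integral identity for $(1-A_k)^{-1}$, the use of the Chernoff bound $\|(T^n-e^{n(T-I)})\psi\|\le\sqrt{n}\,\|(T-I)\psi\|$ to pass between discrete iterates and the interpolating semigroup $e^{tA_k}$, and the final appeal to the continuous-time Trotter--Kato theorem are exactly the ingredients of the proof in \cite{EK86}. The only point worth tightening is the dominated-convergence step in $(2)\Rightarrow(1)$: you should record explicitly that $(1+2^{-k})^{2^k}\ge 2$ (Bernoulli) gives a $k$-independent integrable majorant such as $2\cdot 2^{-s}\|\psi\|$. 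Otherwise the sketch is complete, and your acknowledgement that the continuous Trotter--Kato theorem is the one nontrivial external input is accurate --- indeed you cite the same source the paper does.
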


Armed with this result, we may now proceed to prove theorem 
\ref{thm:main}.  We will first prove the forward direction, and we
subsequently consider the converse implication.

\begin{proof}[Theorem \ref{thm:main}, 1$\Rightarrow$2]
Let us restrict our attention to the interval $[0,N]$ with
$N\in\mathbb{N}$.  It suffices to prove that convergence holds uniformly
on $[0,N]$ for any $N$.  We may therefore restrict the Hilbert space to
$\mathcal{H}\otimes\mathcal{F}_{N]}$, which we do from now on.

First, let $\alpha,\beta\in\mathbb{C}^n$ and $\psi^k,\varphi^k\in
\mathcal{F}_{2^{-k}]}$ as in the statement of the theorem.  Then
\begin{multline*}
	\langle u\otimes(\psi^k)^{\otimes N2^k},
	R_t^k\,v\otimes(\varphi^k)^{\otimes N2^k}\rangle =
	\|\psi^k\|^{N2^k}\|\varphi^k\|^{N2^k}
	\langle u,(R^{k;\psi^k\varphi^k})^{\lfloor t2^k\rfloor}v\rangle \\
	\xrightarrow{k\to\infty}
	e^{(|\alpha|^2+|\beta|^2)N/2}
	\langle u,T_t^{\alpha\beta}v\rangle
	=
	\langle u\otimes e(\alpha I_{[0,N]}),
	U_t\,v\otimes e(\beta I_{[0,N]})\rangle
\end{multline*}
for any $u,v\in\mathcal{H}$ and $t\in[0,N]$ by lemma \ref{lem:discgen} and
theorem \ref{thm:kurtz}.  Similarly, we can establish the following.
Let $0=t_0<t_1<\cdots<t_m<t_{m+1}=N$ ($m\in\mathbb{N}$) be a dyadic
rational partition of $[0,N]$, i.e., $t_j=\ell_j2^{-k'}$ for
some $k'\in\mathbb{N}$ with $\ell_j\in\mathbb{N}$ for all $j=1,\ldots,m$.  
Let $\alpha_0,\ldots,\alpha_m,\beta_0,\ldots,\beta_m\in\mathbb{C}^n$, and
choose $\psi_j^k,\varphi_j^k\in\mathcal{F}_{2^{-k}]}$ such that
$$
	(\psi^{k}_j)^{\otimes 2^k}\xrightarrow{k\to\infty}
	e(\alpha_j I_{[0,1]}),\qquad
	(\varphi^{k}_j)^{\otimes 2^k}\xrightarrow{k\to\infty}
	e(\beta_j I_{[0,1]}).
$$
Let $f,g\in L^2([0,N];\mathbb{C}^n)$ be simple functions with
$f(s)=\alpha_{j}$ and $g(s)=\beta_j$ for $s\in [t_j,t_{j+1}[\mbox{}$, and
define for all $k\ge k'$ the vectors
\begin{equation*}
\begin{split}
	&\psi_k=(\psi^k_0)^{\otimes \ell_12^{k-k'}}
	\otimes (\psi^k_1)^{\otimes (\ell_2-\ell_1)2^{k-k'}}
	\otimes\cdots
	\otimes (\psi^k_m)^{\otimes (N2^{k'}-\ell_m)2^{k-k'}},\\
	&\varphi_k=(\varphi^k_0)^{\otimes \ell_12^{k-k'}}
	\otimes (\varphi^k_1)^{\otimes (\ell_2-\ell_1)2^{k-k'}}
	\otimes\cdots
	\otimes (\varphi^k_m)^{\otimes (N2^{k'}-\ell_m)2^{k-k'}}.
\end{split}
\end{equation*}
Then $\psi_k\to e(f)$ and $\varphi_k\to e(g)$ as $k\to\infty$, and
it is not difficult to verify (using the cocycle property of $U_t$)
that for $t\in [t_j,t_{j+1}[\mbox{}$
$$
        \langle u\otimes e(f),U_t\,v\otimes e(g)\rangle =
        \|e(f)\|\,\|e(g)\|\,
        \langle u,T_{t_1}^{\alpha_{0}\beta_{0}}
                T_{t_2-t_1}^{\alpha_{1}\beta_{1}}\cdots
                T_{t-t_j}^{\alpha_{j}\beta_{j}}v\rangle.
$$
We therefore obtain for $u,v\in\mathcal{H}$ and 
$t\in [t_j,t_{j+1}[\mbox{}$
\begin{multline*}
	\langle u\otimes\psi_k,R_t^k\,v\otimes\varphi_k\rangle =
	\|\psi_k\|\,\|\varphi_k\|\,
	\langle u,(R^{k;\psi^k_0\varphi^k_0})^{t_12^k}
	\cdots
	(R^{k;\psi^k_j\varphi^k_j})^{\lfloor
	(t-t_j)2^k
	\rfloor}v\rangle \\
	\xrightarrow{k\to\infty}
        \|e(f)\|\,\|e(g)\|\,
        \langle u,T_{t_1}^{\alpha_{0}\beta_{0}}
                T_{t_2-t_1}^{\alpha_{1}\beta_{1}}\cdots
                T_{t-t_j}^{\alpha_{j}\beta_{j}}v\rangle =
        \langle u\otimes e(f),U_t\,v\otimes e(g)\rangle,
\end{multline*}
where we have again appealed to lemma \ref{lem:discgen} and theorem 
\ref{thm:kurtz}.  Now note that
\begin{multline*}
	|\langle u\otimes\psi_k,R_t^k\,v\otimes\varphi_k\rangle -
	\langle u\otimes e(f),R_t^k\,v\otimes e(g)\rangle| \\ \le
	|\langle u\otimes(\psi_k-e(f)),R_t^k\,v\otimes\varphi_k\rangle|
	+
	|\langle u\otimes e(f),R_t^k\,v\otimes(\varphi_k-e(g))\rangle
	\\
	\le \|u\|\,\|v\|\,(\|\psi_k-e(f)\|\,\|\varphi_k\|+
		\|e(f)\|\,\|\varphi_k-e(g)\|
	\xrightarrow{k\to\infty}0,
\end{multline*}
where we have used that $R_t^k$ is unitary.  Therefore
$$
	\langle u\otimes e(f),R_t^k\,v\otimes e(g)\rangle
	\xrightarrow{k\to\infty} 
	\langle u\otimes e(f),U_t\,v\otimes e(g)\rangle
$$
for any $u,v\in\mathcal{H}$ and simple functions $f,g\in 
L^2([0,N];\mathbb{C}^n)$ with dyadic rational jump points.
As the latter are dense in $L^2([0,N];\mathbb{C}^n)$, a
similar approximation argument shows that $R_t^k$ converges
weakly to $U_t$ as $k\to\infty$ for every $t\in[0,N]$.

It remains to strengthen weak convergence to strong convergence uniformly
on $[0,N]$.  First, note that as $R_t^k$ and $U_t$ are all unitary,
weak convergence of $R_t^k$ to $U_t$ (which is of course equivalent to 
weak convergence of $(R_t^k)^*$ to $U_t^*$) already implies that 
$(R_t^k)^*\to U_t^*$ strongly as $k\to\infty$ for every fixed time 
$t\in[0,N]$.  To prove that the convergence is in fact uniform, we will 
utilize another implication of theorem \ref{thm:kurtz}.

It is convenient to extend the Fock space to two-sided time,
i.e., we will consider the ampliations of all our operators to the
extended Fock space $\mathcal{\tilde F}=\Gamma_s(L^2(\mathbb{R};\mathbb{C}^n))
\cong\mathcal{F}_-\otimes\mathcal{F}$, where $\mathcal{F}_-\cong\mathcal{F}$
is the negative time portion of the two-sided Fock space.  We now define
the two-sided shift $\tilde\theta_t:L^2(\mathbb{R};\mathbb{C}^n)\to
L^2(\mathbb{R};\mathbb{C}^n)$ as $\tilde\theta_tf(s)=f(t+s)$, and by
$\tilde\Theta_t:\mathcal{\tilde F}\to\mathcal{\tilde F}$ its second
quantization.  Note that $\tilde\Theta_t$ is a strongly continuous
one-parameter unitary group, and that the cocycle property reads
$U_{t+s}=U_s\tilde\Theta_s^*U_t\tilde\Theta_s$, etc., in terms of the
two-sided shift.  Now define on the two-sided Fock space the operators
$$
        V_t=\tilde\Theta_tU_t^*,\qquad\qquad
	S^k=\tilde\Theta_{2^{-k}}(R^k)^*.
$$
Then it is immediate from the cocycle property that $V_t$
defines a strongly continuous unitary (hence contraction) semigroup on
$\mathcal{H}\otimes\mathcal{\tilde F}$ and that the unitary (hence 
contraction) $S^k$ is 
such that $(S^k)^{\lfloor t2^k\rfloor}=\tilde\Theta_{2^{-k}\lfloor 
t2^k\rfloor}(R_t^k)^*$.  Moreover, for any dyadic rational $t$
$$
        \|(R_t^{k})^*\psi - U_t^*\psi\| =
        \frac{\|(S^k)^{\lfloor t2^k\rfloor}\,\psi_-\otimes\psi 
	- V_t\,\psi_-\otimes\psi\|}
        {\|\psi_-\|}\qquad
        \forall\,\psi\in\mathcal{H}\otimes\mathcal{F},
        ~\psi_-\in\mathcal{F}_-
$$
for $k$ sufficiently large, as $\tilde\Theta_t$ is an isometry (here
$\psi_-\otimes\psi\in\mathcal{F}_-\otimes\mathcal{H}\otimes\mathcal{F}
\cong\mathcal{H}\otimes\mathcal{\tilde F}$).  As vectors of the
form $\psi_-\otimes\psi$ are total in $\mathcal{H}\otimes\mathcal{\tilde 
F}$ and as we have already established that
$\|(R_t^{k})^*\psi - U_t^*\psi\|\to 0$ as $k\to\infty$, we find that for 
every fixed dyadic rational $t$
$$
	\|(S^k)^{\lfloor t2^k\rfloor}\,\psi
	- V_t\,\psi\|\xrightarrow{k\to\infty}0
	\quad\mbox{for all }\psi\in\mathcal{H}\otimes\mathcal{\tilde F}.
$$
As $V_t$ is strongly continuous and the dyadic rationals are dense, this 
evidently holds for every fixed $t$.  It remains to apply the implication 
2$\Rightarrow$3 of theorem \ref{thm:kurtz}.
\end{proof}

\begin{proof}[Theorem \ref{thm:main}, 2$\Rightarrow$1]
Fix $\alpha,\beta\in\mathbb{C}^n$, $v\in\mathcal{H}$, and
$t\in[0,N]$, and let us choose the sequences $\psi^k=e(\alpha 
I_{[0,2^{-k}]})$ and $\varphi^k=e(\beta I_{[0,2^{-k}]})$.  Then we can 
estimate
\begin{equation*}
\begin{split}
        &\|(R^{k;\psi^k\varphi^k})^{\lfloor t2^k\rfloor}v
		-T_t^{\alpha\beta}v\|^2
	=
        \sup_{\stackrel{u\in\mathcal{H}}{\|u\|\le 1}}
        |\langle u,(R^{k;\psi^k\varphi^k})^{\lfloor 
		t2^k\rfloor}v-T_t^{\alpha\beta}v\rangle|^2
        \\ 
	&\qquad =
	\sup_{\stackrel{u\in\mathcal{H}}{\|u\|\le 1}}
        \frac{|\langle u\otimes e(\alpha I_{[0,N]}),
        (R_t^{k}-U_t)\,v\otimes e(\beta I_{[0,N]})\rangle|^2}
        {e^{(|\alpha|^2+|\beta|^2)N}}  
	\\
	&\qquad \le
        \frac{\|(R_t^{k}-U_t)\,v\otimes e(\beta I_{[0,N]})\|^2}
	{e^{|\beta|^2N}} \\
	&\qquad 
        =
        \frac{2\,\|v\otimes e(\beta I_{[0,N]})\|^2
        - 2\,\mathrm{Re}(\langle R_t^{k}\,v\otimes e(\beta I_{[0,N]})
	,U_t\,v\otimes e(\beta I_{[0,N]})\rangle)}
	{e^{|\beta|^2N}}.
\end{split}
\end{equation*}
But by assumption $\|(R_t^{k*}- U_t^*)\,\psi\|\to 0$ as $k\to\infty$
for all $\psi\in\mathcal{H}\otimes\mathcal{F}$, so that in particular
$\langle R_t^{k}\,v\otimes e(\beta I_{[0,N]}),U_t\,v\otimes e(\beta 
I_{[0,N]})\rangle\to \|v\otimes e(\beta I_{[0,N]})\|^2$.  We 
thus obtain 
$$
        \|(R^{k;\psi^k\varphi^k})^{\lfloor t2^k\rfloor}v
	-T_t^{\alpha\beta}v\| 
        \xrightarrow{k\to\infty}0\qquad
        \mbox{for all }v\in\mathcal{H},~t\ge 0.
$$
The result now follows by appealing to the implication 
2$\Rightarrow$1 of theorem \ref{thm:kurtz}. 
\end{proof}

\subsection{Proof of theorem \ref{thm:expl}}

The proof of theorem \ref{thm:expl} is chiefly a matter of straightforward 
computation.  We make use of one simple trick: the trivial identities
$$
	e^{ix} = 1 + x\,f(x) = 1 + ix + x^2\,g(x)
$$
and the conditions $\lambda_j\chi_0=0$, $\langle\chi_0,\mu_j\chi_0\rangle=0$ 
allow us to cancel those terms in the expression for 
$2^k(R^{k;\psi^k\varphi^k}-I)u$ which diverge as $k\to\infty$.

\begin{proof}[Theorem \ref{thm:expl}]
Throughout the proof we fix $\alpha,\beta\in\mathbb{C}^n$ and
$u\in\mathcal{H}$, and we define $\psi^k=\pi^k\chi^k(\alpha)$, 
$\varphi^k=\pi^k\chi^k(\beta)$.  By theorem \ref{thm:main}, it suffices
to show that
$$
	\|2^k(R^{k;\psi^k\varphi^k}-I)u-\mathscr{L}^{\alpha\beta}u\|
	\xrightarrow{k\to\infty}0,
$$
where $\mathscr{L}^{\alpha\beta}u$ is given in lemma \ref{lem:gens} in 
terms of the coefficients defined in section \ref{sec:explicit}.

For the time being, let us fix $k$.  As $H^k$ is self-adjoint, we may
assume by the spectral theorem that $\mathcal{H}\otimes\mathcal{K}\cong
L^2(\Omega^k)$ for some measure space $(\Omega^k,\Sigma^k,P^k)$ and that
$H^k$ acts on $L^2(\Omega^k)$ by pointwise multiplication
$(H^k\psi)(\omega)=h^k(\omega)\psi(\omega)$ for all $\psi\in
L^2(\Omega^k)$.  We represent the vectors $u\otimes\chi_j$ and
$v\otimes\chi_j$ in $L^2(\Omega^k)$ as $u_j(\omega)$ and $v_j(\omega)$, 
respectively.  As $u\otimes\chi_0$ and $v\otimes\chi_0$ are in 
$\mathrm{Dom}(H^k)$ (so that $h^k(\omega)u_0(\omega)$ and 
$h^k(\omega)v_0(\omega)$ are square integrable), the trivial identity 
$e^{ix}=1+ix+x^2\,g(x)$ gives that
\begin{multline*}
	\langle v\otimes\chi_0,2^k(e^{iH^k2^{-k}}-I)
	\,u\otimes\chi_0\rangle =
	i\int v_0(\omega)^*h^k(\omega)u_0(\omega)\,P^k(d\omega) \\
	+2^{-k}\int 
	(h^k(\omega)v_0(\omega))^*\,g(h^k(\omega)2^{-k})\,h^k(\omega)
	u_0(\omega)\,P^k(d\omega) \\
	= i\langle v\otimes\chi_0,H^k\,u\otimes\chi_0\rangle +
	2^{-k}\langle H^k\, v\otimes\chi_0,g(H^k2^{-k})\,
	H^k\,u\otimes\chi_0\rangle.
\end{multline*}
Using the trivial identity $e^{ix}=1+x\,f(x)$, we similarly
obtain for $p=1,\ldots,n$
\begin{equation*}
\begin{split}
	&
	\langle v\otimes\chi_p,2^k(e^{iH^k2^{-k}}-I)
	\,u\otimes\chi_0\rangle =
	\langle v\otimes\chi_p,f(H^k2^{-k})\,H^k\,u\otimes\chi_0\rangle,\\
	&
	\langle v\otimes\chi_0,2^k(e^{iH^k2^{-k}}-I)
	\,u\otimes\chi_p\rangle =
	\langle H^k\,v\otimes\chi_0,f(H^k2^{-k})\,u\otimes\chi_p\rangle.
\end{split}
\end{equation*}
Using that $\lambda_j\chi_0=0$ and $\langle\chi_0,\mu_j\chi_0\rangle=0$,
a simple computation gives the following.
\begin{equation*}
\begin{split}
	&\langle v\otimes\chi^k(\alpha),2^k(e^{iH^k2^{-k}}-I)\,
	u\otimes\chi^k(\beta)\rangle = \\
	&\qquad\qquad 
	i\sum_{j=1}^r\langle v\otimes\chi_0,H_ju\otimes\nu_j\chi_0\rangle \\
	&\qquad\qquad
	+
	\sum_{j,j'=1}^m\langle G_jv\otimes\mu_j\chi_0,g(H^k2^{-k})\,
	G_{j'}u\otimes\mu_{j'}\chi_0\rangle \\
	&\qquad\qquad
	+ \sum_{p=1}^n\alpha_p^*\sum_{j=1}^m
	\langle v\otimes\chi_p,f(H^k2^{-k})\,G_ju\otimes\mu_j\chi_0\rangle
	\\
	&\qquad\qquad
	+ \sum_{p=1}^n\sum_{j=1}^m
	\langle G_jv\otimes\mu_j\chi_0,f(H^k2^{-k})\,u\otimes\chi_p\rangle
	\,\beta_p \\
	&\qquad\qquad
	+ \sum_{p,q=1}^n\alpha^*_p\,
	\langle v\otimes\chi_p,(e^{iH^k2^{-k}}-I)\,
        u\otimes\chi_q\rangle\,\beta_p
	+ O(\|v\|\,2^{-k/2}).
\end{split}
\end{equation*}
To proceed, let us define 
\begin{equation*}
        \overline{\mathscr{L}}^{\alpha\beta}u =
        \left(\sum_{p,q=1}^n \alpha_p^* (N_{pq}-\delta_{pq})\beta_q +
        \sum_{p=1}^n \alpha^*_p M_p +
        \sum_{p=1}^n L_p \beta_p + K
        \right)u.
\end{equation*}
Using the definition of $N_{pq},M_p,L_p,K$ in section \ref{sec:explicit},
we obtain
\begin{equation*}
\begin{split}
	&\langle v,\overline{\mathscr{L}}^{\alpha\beta}u\rangle =
	i\sum_{j=1}^r\langle v\otimes\chi_0,H_ju\otimes\nu_j\chi_0\rangle 
	\\ &\qquad\qquad
	+
	\sum_{j,j'=1}^m\langle G_jv\otimes\mu_j\chi_0,g(\check F)\,
	G_{j'}u\otimes\mu_{j'}\chi_0\rangle \\
	&\qquad\qquad
	+ \sum_{p=1}^n\alpha_p^*\sum_{j=1}^m
	\langle v\otimes\chi_p,f(\check F)\,G_ju\otimes\mu_j\chi_0\rangle
	\\
	&\qquad\qquad
	+ \sum_{p=1}^n\sum_{j=1}^m
	\langle G_jv\otimes\mu_j\chi_0,f(\check F)\,u\otimes\chi_p\rangle
	\,\beta_p \\
	&\qquad\qquad
	+ \sum_{p,q=1}^n\alpha^*_p\,
	\langle v\otimes\chi_p,(e^{i\check F}-I)\,
        u\otimes\chi_q\rangle\,\beta_p.
\end{split}
\end{equation*}
Thus evidently we obtain
$$
	\sup_{\stackrel{v\in\mathcal{H}}{\|v\|\le 1}}
	|\langle v\otimes\chi^k(\alpha),2^k(e^{iH^k2^{-k}}-I)\,
        u\otimes\chi^k(\beta)\rangle-
	\langle v,\overline{\mathscr{L}}^{\alpha\beta}u\rangle|
	\xrightarrow{k\to\infty}0,
$$
provided that $f(H^k2^{-k})\to f(\check F)$, $g(H^k2^{-k})\to g(\check F)$
and $e^{iH^k2^{-k}}\to e^{i\check F}$ strongly as $k\to\infty$.  This 
is indeed the case due to \cite[theorems VIII.20 and VIII.25]{ReS80}.

To complete the proof, note that we can write
\begin{equation*}
\begin{split}
	&\|2^k(R^{k;\psi^k\varphi^k}-I)u-\mathscr{L}^{\alpha\beta}u\| 
	= \sup_{\stackrel{v\in\mathcal{H}}{\|v\|\le 1}}|
	\langle v,2^k(R^{k;\psi^k\varphi^k}-I)u\rangle -
	\langle v,\mathscr{L}^{\alpha\beta}u\rangle| \\
	&\qquad
	= \sup_{\stackrel{v\in\mathcal{H}}{\|v\|\le 1}}\left|
	\frac{2^k\,\langle v\otimes\chi^k(\alpha),e^{iH^k2^{-k}}\,
        u\otimes\chi^k(\beta)\rangle}{\|\chi^k(\alpha)\|\,
	\|\chi^k(\beta)\|} 
	- 2^k\,\langle v,u\rangle
	- \langle v,\mathscr{L}^{\alpha\beta}u\rangle
	\right|
	\\
	&\qquad
	\le
	\frac{
	\sup_{v\in\mathcal{H},\|v\|\le 1}
	|\langle v\otimes\chi^k(\alpha),2^k(e^{iH^k2^{-k}}-I)\,
        u\otimes\chi^k(\beta)\rangle-
	\langle v,\overline{\mathscr{L}}^{\alpha\beta}u\rangle|
	}{\|\chi^k(\alpha)\|\,\|\chi^k(\beta)\|}
	\\
	&\qquad\qquad
	+ \left\|
	2^k\left(
	\frac{\langle\chi^k(\alpha),\chi^k(\beta)\rangle}{
		\|\chi^k(\alpha)\|\,\|\chi^k(\beta)\|}-1\right)u
	+
	\frac{\overline{\mathscr{L}}^{\alpha\beta}u}{
		\|\chi^k(\alpha)\|\,\|\chi^k(\beta)\|}	
	- \mathscr{L}^{\alpha\beta}u
	\right\|,
\end{split}
\end{equation*}
which converges to zero as $k\to\infty$.  The claim has been established.
\end{proof}

\subsection{Proof of lemma \ref{lem:uniqueness}}

By our assumptions, we have
$$
	\langle \chi^k(\alpha),\tilde\chi^k(\beta)\rangle^{2^k} =
	\langle (\pi^k\chi^k(\alpha))^{\otimes 2^k},
		(\pi^k\tilde\chi^k(\beta))^{\otimes 
		2^k}\rangle 
	\xrightarrow{k\to\infty} 
	e^{\alpha^*\beta}
$$
for every $\alpha,\beta\in\mathbb{C}^n$.  For $\alpha,\beta=0$ we find 
that $\langle\chi_0,\tilde\chi_0\rangle^{2^k}\to 1$ as $k\to\infty$, so we 
must have $|\langle\chi_0,\tilde\chi_0\rangle|=1$.  But $\chi_0$ and 
$\tilde\chi_0$ are unit vectors, so $\tilde\chi_0=e^{i\phi}\chi_0$ 
for some $\phi\in\mathbb{R}$ such that $e^{i\phi 2^k}\to 1$ as 
$k\to\infty$.  We may now compute for arbitrary 
$\alpha,\beta\in\mathbb{C}^n$
\begin{multline*}
	\langle \chi^k(\alpha),\tilde\chi^k(\beta)\rangle^{2^k} =
	e^{i\phi 2^k}
	\left[1 + 2^{-k}e^{-i\phi}\sum_{i,j=1}^n\alpha_i^*\,\langle
	\chi_i,\tilde\chi_j\rangle\,\beta_j\right]^{2^k} \\
	\mbox{}
	\xrightarrow{k\to\infty}
	\exp\left[e^{-i\phi}\sum_{i,j=1}^n\alpha_i^*\,\langle
        \chi_i,\tilde\chi_j\rangle\,\beta_j\right].
\end{multline*}
Substituting $\alpha\mapsto t\alpha$ and differentiating, we find that
$$
	0 = \frac{d}{dt}\left.\left[
	e^{te^{-i\phi}\sum_{i,j=1}^n\alpha_i^*\,\langle
        \chi_i,\tilde\chi_j\rangle\,\beta_j}-
	e^{t\alpha^*\beta}
	\right]\right|_{t=0} =
	e^{-i\phi}\sum_{i,j=1}^n\alpha_i^*\,\langle
        \chi_i,\tilde\chi_j\rangle\,\beta_j -
	\alpha^*\beta
$$
for every $\alpha,\beta\in\mathbb{C}^n$.  We thus find that
$\langle\chi_j,\tilde\chi_j\rangle=e^{i\phi}$ for every $j$, which
implies that $\tilde\chi_j=e^{i\phi}\chi_j$ as $\chi_j$ and 
$\tilde\chi_j$ are unit vectors.  The proof is complete.
\qed

\subsection{Proof of lemma \ref{lem:hp}}

As the coefficients are bounded, it suffices to verify the 
Hudson-Parthasarathy conditions in remark \ref{rem:hp}.
We first show that
$$
  	\sum_{j=1}^n N_{pj}N_{qj}^*  = \delta_{pq},\qquad\qquad
  	\sum_{j=1}^n N_{jp}^*N_{jq} = \delta_{pq}.
$$
To see this, note that for all $u,v \in \mathcal{H}$,
the following identities hold true:
\begin{equation*}\begin{split}
	  &\langle u,v\rangle\,\delta_{pq} =
	  \langle u\otimes \chi_p,
	  e^{i\check F}e^{-i\check F}\,v\otimes \chi_q\rangle
	  = \langle u\otimes \chi_p,
	  e^{i\check F}Pe^{-i\check F}\,v\otimes \chi_q\rangle,\\
	  &\langle u,v\rangle\,\delta_{pq} =
	  \langle u\otimes \chi_p,
	  e^{-i\check F}e^{i\check F}\,v\otimes \chi_q\rangle
	  = \langle u\otimes \chi_p,
	  e^{-i\check F}Pe^{i\check F}\,v\otimes \chi_q\rangle
\end{split}\end{equation*}
for $p,q=1,\ldots,n$, where $P := \sum_{r=1}^n I \otimes 
\chi_r\chi_r^*$ is the orthogonal projection onto 
$\mathcal{H}\otimes\mathcal{S}$ and we have used that $\lambda_{j} 
\mathcal{S} \subset \mathcal{S}$ in the last step.  Using the definition 
of $N_{pq}$, the claim is easily established.
The next condition that we will check is
$$
  	K+K^* = - \sum_{p=1}^n L_pL_p^*.
$$
Since 
$$
	K+K^* = \sum_{i,j=1}^n G_i (W_{ij} + W_{ji}^*)G_j,\qquad
	\sum_{p=1}^n L_pL_p^* = \sum_{i,j=1}^n
	G_i \left(\sum_{p=1}^nY_i^pY_j^{p*}\right)G_j,
$$
the claim would follow immediately if
we can show that for all $i$ and $j$
$$
	  W_{ij} + W_{ji}^* = -\sum_{p=1}^nY_i^pY_j^{p*}.
$$
To see this, we begin by noting that $-f(x)f(x)^* = 2\,(\cos(x) -1)/x^2$.
Furthermore, since $\mu_j\chi_0 \in \mathcal{S}$ and $\lambda_{j} 
\mathcal{S} \subset\mathcal{S}$, we find that $f(\check F)^*\,v\otimes 
\mu_j \chi_0 \in \mathcal{H}\otimes\mathcal{S}$.  Therefore
$$
	f(\check F)f(\check F)^*\,v\otimes \mu_j \chi_0 =
	f(\check F)Pf(\check F)^*\,v\otimes \mu_j \chi_0,
$$
and we can write
\begin{equation*}
  	\left\langle u\otimes\mu_i \chi_0,
  	\frac{2\cos(\check F) - 2I}{\check F^2}\,
	v\otimes \mu_j\chi_0\right\rangle =
	  -\langle u\otimes\mu_i \chi_0,
	  f(\check F) P f(\check F)^*\,v\otimes \mu_j\chi_0\rangle.
\end{equation*}
Using the definition of $Y_i^p$, we find that for all $u,v \in 
\mathcal{H}$
\begin{equation*}
  - \sum_{p=1}^n \langle u,Y_i^pY_j^{p*}\,v\rangle =
  \left\langle u\otimes\mu_i \chi_0,
  \frac{2\cos(\check F) - 2I}{\check F^2}\,v\otimes 
	\mu_j\chi_0\right\rangle.
\end{equation*}
On the other hand, note that $g(x) + g(x)^* = 2\,(\cos(x) -1)/x^2$. 
We thus obtain
   \begin{equation*}
   \langle u,(W_{ij} + W_{ji}^*)\,v\rangle = \left\langle 
	u\otimes\mu_i \chi_0,
	  \frac{2\cos(\check F) - 2I}{\check F^2}\,v\otimes 
	\mu_j\chi_0\right\rangle
\end{equation*}
for all $u,v \in \mathcal{H}$ from the definition of $W_{ij}$.
The claim is established.

The last condition that we need to check reads
$$
  M_p = -\sum_{q= 1}^n N_{pq}L_q^*.
$$
To see this, we begin by noting that $-e^{ix}f(x)^* = f(x)$.
Since $\mu_j\chi_0 \in \mathcal{S}$
and $\lambda_{j}\mathcal{S} \subset \mathcal{S}$, we find
that $f(\check F)^*\, G_jv\otimes \mu_j\chi_0 \in \mathcal{S}$
for all $j=1,\ldots,m$.  Therefore 
$$
	f(\check F)\, G_jv\otimes \mu_j \chi_0 
	= -e^{i\check F}Pf(\check F)^*\, G_jv\otimes \mu_j \chi_0.
$$
We obtain
\begin{equation*}
  \sum_{j=1}^m\langle u\otimes\chi_p,
  f(\check F)\, G_jv\otimes \mu_j\chi_0\rangle =
  - \sum_{j=1}^m\langle u \otimes \chi_p,
  e^{i\check F}Pf(\check F)^*\, G_j v\otimes\mu_j\chi_0\rangle.
  \end{equation*}
Using the definitions of $N_{pq}$, $Y_j^q$, $L_q$ and $M_p$, we find
that the claim holds true.  \qed

\bibliographystyle{plain}
\bibliography{ref}

\begin{thebibliography}{10}

\bibitem{AFL82}
L.~Accardi, A.~Frigerio, and J.~T. Lewis.
\newblock Quantum stochastic processes.
\newblock {\em Publ. RIMS}, 18:97--133, 1982.

\bibitem{AFLu90}
L.~Accardi, A.~Frigerio, and Y.~G. Lu.
\newblock The weak coupling limit as a quantum functional central limit.
\newblock {\em Commun. Math. Phys.}, 131:537--570, 1990.

\bibitem{AtP06}
S.~Attal and Y.~Pautrat.
\newblock From repeated to continuous quantum interactions.
\newblock {\em Ann. Henri Poincar\'e}, 7:59--104, 2006.

\bibitem{Bar03}
A.~Barchielli.
\newblock Continual measurements in quantum mechanics and quantum stochastic
  calculus.
\newblock In S.~Attal, A.~Joye, and C.-A. Pillet, editors, {\em Open Quantum
  Systems III: Recent Developments}, pages 207--292. Springer, 2006.

\bibitem{BoS08}
L.~Bouten and A.~Silberfarb.
\newblock Adiabatic elimination in quantum stochastic models.
\newblock {\em Commun. Math. Phys.}, 283:491--505, 2008.

\bibitem{BHJ06}
L.~Bouten, R.~{van Handel}, and M.~R. James.
\newblock A discrete invitation to quantum filtering and feedback control,
  2006.
\newblock {SIAM} {R}eview, at press.

\bibitem{BHS08}
L.~Bouten, R.~{van Handel}, and A.~Silberfarb.
\newblock Approximation and limit theorems for quantum stochastic models with
  unbounded coefficients.
\newblock {\em J. Funct. Anal.}, 254:3123--3147, 2008.

\bibitem{Che03}
A.~M. Chebotarev and G.~V. Ryzhakov.
\newblock Strong resolvent convergence of the {S}chr\"odinger evolution to
  quantum stochastics.
\newblock {\em Mat. Zametki}, 74:762--781, 2003.

\bibitem{DDR06}
J.~Derezinski and W.~{De Roeck}.
\newblock Extended weak coupling limit for {Pauli-Fierz} operators.
\newblock {\em Commun. Math. Phys.}, 279:1--30, 2008.

\bibitem{EK86}
S.~N. Ethier and T.~G. Kurtz.
\newblock {\em Markov processes: Characterization and convergence}.
\newblock Wiley, 1986.

\bibitem{Fag90}
F.~Fagnola.
\newblock On quantum stochastic differential equations with unbounded
  coefficients.
\newblock {\em Probab. Th. Rel. Fields}, 86:501--516, 1990.

\bibitem{Fag93}
F.~Fagnola.
\newblock Characterization of isometric and unitary weakly differentiable
  cocycles in {Fock} space.
\newblock In L.~Accardi, editor, {\em Quantum Probability and Related Topics},
  volume VIII, pages 143--164. World Scientific, 1993.

\bibitem{GSDM03}
J.~M. Geremia, J.~K. Stockton, A.~C. Doherty, and H.~Mabuchi.
\newblock Quantum {Kalman} filtering and the {Heisenberg} limit in atomic
  magnetometry.
\newblock {\em Phys. Rev. Lett.}, 91:250801, 2003.

\bibitem{Gou04}
J.~Gough.
\newblock Holevo-ordering and the continuous-time limit for open floquet
  dynamics.
\newblock {\em Lett. Math. Phys.}, 67:207--221, 2004.

\bibitem{Gou05}
J.~Gough.
\newblock Quantum flows as {Markovian} limit of emission, absorption and
  scattering interactions.
\newblock {\em Commun. Math. Phys.}, 254:489--512, 2005.

\bibitem{Hol92}
A.~S. Holevo.
\newblock Time-ordered exponentials in quantum stochastic calculus.
\newblock In {\em Quantum probability \& related topics}, QP-PQ, VII, pages
  175--202. World Sci. Publ., River Edge, NJ, 1992.

\bibitem{Hol94}
A.~S. Holevo.
\newblock Exponential formulae in quantum stochastic calculus.
\newblock {\em Proc. Royal Soc. Edinburgh}, 126A:375--389, 1996.

\bibitem{HuP84}
R.~L. Hudson and K.~R. Parthasarathy.
\newblock Quantum {It\^o's} formula and stochastic evolutions.
\newblock {\em Commun. Math. Phys.}, 93:301--323, 1984.

\bibitem{JNP07}
M.~R. James, H.~I. Nurdin, and I.~R. Petersen.
\newblock {$H^\infty$} control of linear quantum stochastic systems.
\newblock {\em IEEE Trans. Automat. Control}, 53:1787--1803, 2008.

\bibitem{Kus84}
H.~J. Kushner.
\newblock {\em Approximation and weak convergence methods for random processes,
  with applications to stochastic systems theory}.
\newblock MIT Press, Cambridge, MA, 1984.

\bibitem{LiP88}
J.~M. Lindsay and K.~R. Parthasarathy.
\newblock The passage from random walk to diffusion in quantum probability ii.
\newblock {\em Sankhya: The Indian journal of statistics}, 50:151--170, 1988.

\bibitem{LiW06}
J.~M. Lindsay and S.~J. Wills.
\newblock Construction of some quantum stochastic operator cocycles by the
  semigroup method.
\newblock {\em Proc. Indian Acad. Sci. (Math. Sci.)}, 116:519--529, 2006.

\bibitem{LiW07}
J.~M. Lindsay and S.~J. Wills.
\newblock Quantum stochastic operator cocycles via associated semigroups.
\newblock {\em Math. Proc. Camb. Phil. Soc.}, 142:535--556, 2007.

\bibitem{Mab08}
H.~Mabuchi.
\newblock Coherent-feedback quantum control with a dynamic compensator.
\newblock {\em Phys. Rev. A}, 78:032323, 2008.

\bibitem{JNP07b}
H.~I. Nurdin, M.~R. James, and I.~R. Petersen.
\newblock Coherent quantum {LQG} control, 2007.
\newblock preprint, arXiv:0711.2551.

\bibitem{Par88}
K.~R. Parthasarathy.
\newblock The passage from random walk to diffusion in quantum probability.
\newblock {\em Journal of applied probability}, special volume 25A:151--166,
  1988.

\bibitem{Par92}
K.~R. Parthasarathy.
\newblock {\em An Introduction to Quantum Stochastic Calculus}.
\newblock Birkh\"auser, Basel, 1992.

\bibitem{ReS75}
M.~Reed and B.~Simon.
\newblock {\em Fourier Analysis, Self-adjointness}, volume~2 of {\em Methods of
  Modern Mathematical Physics}.
\newblock Academic Press, 1975.

\bibitem{ReS80}
M.~Reed and B.~Simon.
\newblock {\em Functional Analysis}, volume~1 of {\em Methods of Modern
  Mathematical Physics}.
\newblock Academic Press, 1980.

\bibitem{SV79}
D.~W. Stroock and S.~R.~S. Varadhan.
\newblock {\em Multidimensional diffusion processes}, volume 233 of {\em
  Grundlehren der Mathematischen Wissenschaften}.
\newblock Springer-Verlag, Berlin, 1979.

\end{thebibliography}

\end{document}